\theoremstyle{plain}
\newtheorem{theorem}{Theorem}
\newtheorem{corollary}[theorem]{Corollary}
\newtheorem{proposition}[theorem]{Proposition}
\begin{document}
\title{Congestion Games with Complementarities\thanks{This work was partially supported by the German Research Foundation (DFG) within the Collaborative Research Centre ``On-The-Fly Computing'' (SFB 901).\newline\newline The final publication is available at Springer via http://dx.doi.org/10.1007/978-3-319-57586-5\_19.}}
%
%
\author{Matthias~Feldotto \and Lennart~Leder \and Alexander~Skopalik}
%
%
%

\maketitle              

\begin{abstract} 
We study a model of selfish resource allocation that seeks to incorporate dependencies among resources as they exist in modern networked environments. Our model is inspired by utility functions with constant elasticity of substitution (CES) which is a well-studied model in economics. We consider congestion games with different aggregation functions. In particular, we study $L_p$ norms and analyze the existence and complexity of (approximate) pure Nash equilibria. Additionally, we give an almost tight characterization based on monotonicity properties to describe the set of aggregation functions that guarantee the existence of pure Nash equilibria.

\end{abstract}

\section{Introduction}

Modern networked environments often lack a central authority that has the ability or the necessary information to 
coordinate the allocation of resources such as bandwidths of network links, server capacities, cloud computing resources, etc. 
Hence, allocation decisions are delegated to local entities or customers.
Often they are interested in allocations that optimize for themselves rather than for overall system performance. 
We study the strategic interaction that arises in such situations using game theoretic methods. 

The class of congestion games~\cite{RO73} is a well-known model to study scenarios in which the players allocate shared resources.
In a congestion game each player chooses a subset of resources from a collection of allowed subsets which are called strategies. These resources may represent links in a network, servers, switches, etc.
Each resource is equipped with a cost function that is mapping from the number of players using it to a cost value.
The cost of a player is  the sum of the costs of the resources  in the chosen strategy. There are several well-known extensions to this model. In \emph{weighted} congestion games \cite{FO05}, players can have different weights and the cost of a resource depends on the total weight of the players using it. In \emph{player-specific} congestion games \cite{MI96}, the costs of resources can be different for different players.

However, all these models have in common that the cost of a player is defined as the sum of the resource costs. This is well-suited to describe latencies or delays of computer or traffic networks, for example.
However,  in scenarios in which bandwidth determines the costs of players this is determined by the bottleneck link.  To that end, bottleneck congestion games have been introduced \cite{BA07} in which the cost of a player is defined as the cost of her most expensive resource.
  
Both models are limited in their ability to model complementarities that naturally arise in scenarios where the performance of a resource depends to some degree on the performance of other resources. For example, a cloud-based web application may be comprised of many resources. A low performing resource negatively influences the performance of other parts and hence the overall system.
Bottleneck games assume perfect complements, whereas standard congestion games assume independence.
We seek to generalize both models and allow for different degrees of complementarity that may even differ between players. We are inspired by utility functions with constant elasticity of substitution (CES)~\cite{ACMS61,DS77}, which are a well-studied and accepted model in economics. We adapt the notion to our needs and study the analogue version for cost functions that corresponds to $L_p$ norms.
Clearly, both standard congestion games  and bottleneck congestion games are special cases of these games with $L_1$ and $L_\infty$ norms, respectively.
Using further aggregation functions instead of $L_p$ norms even allows to model more complex dependencies. Based on natural monotonicity properties of these functions, we can characterize the existence of pure Nash equilibria.

\subsection{Related Work}

Congestion games were introduced by Rosenthal~\cite{RO73} who shows that these games are potential games. In fact,  the class is isomorphic to the class of potential games as shown by Monderer and Shapley~\cite{MS96}.
The price of anarchy in the context of network congestion games was first considered by Koutsopias and Papadimitriou~\cite{KO99}. The related concept of smoothness, which can be used to derive a bound on the price of anarchy, was introduced by Roughgarden~\cite{RU09}.
Fabrikant et al.~\cite{FA04} show that in congestion games improvement sequences may have exponential length, and that it is in general PLS-complete to compute a pure Nash equilibrium.
Chien and Sinclair~\cite{CS07} show that for symmetric congestion games with a mild assumption on the cost function the approximate best-response dynamics converge quickly to an approximate pure Nash equilibrium. In contrast to that, Skopalik and Vöcking~\cite{SK08} show that it is in general even PLS-hard to compute approximate pure Nash equilibria for any polynomially computable approximation factor.
However, if the cost functions are restricted to linear or constant degree polynomials,
approximate pure Nash equilibria can be computed in polynomial time as shown by Caragiannis et al.~\cite{CA11},
even for weighted games~\cite{CA12} and some other variants~\cite{CFG14,FG14}.
Hansknecht et al.~\cite{HK14} use the concept of approximate potential functions to examine approximate pure Nash equilibria in weighted congestion games under different restrictions on the cost functions. For polynomial cost functions of maximal degree $g$ they show that $(g+1)$-approximate equilibria are guaranteed to exist.

Singleton congestion games, a class of congestion games which guarantees polynomial convergence of best-response improvement sequences to pure Nash equilibria, are considered by Ieong et al.~\cite{IE05}. They show that the property of polynomial convergence can be generalized to so called independent-resource congestion games. It is further generalized by Ackermann et al.~\cite{AC08} to matroid congestion games. They show that, for non-decreasing cost functions, the matroid property is not only sufficient, but also necessary to guarantee the convergence to pure Nash equilibria in polynomial time.
Milchtaich~\cite{MI96} studies the concept of player-specific congestion games and shows that in the singleton case these games always admit pure Nash equilibria. Ackermann et al.~\cite{AR09} generalize these results to matroid strategy spaces and show that the result also holds for weighted congestion games. They also examine the question of efficient computability and convergence towards these equilibria. Furthermore, they point out that in a natural sense the matroid property is maximal for the guaranteed existence of pure Nash equilibria in player-specific and weighted congestion games.
Moreover, Milchtaich~\cite{MI96} examines congestion games in which players are both weighted and have player-specific cost functions. By constructing a game with three players he shows that these games, even in the case of singleton strategies, do not necessarily possess pure Nash equilibria.
Mavronicalas et al.~\cite{MA07} study a special case of these games in which cost functions are not entirely player-specific. Instead, the player-specific resource costs are derived by combining the general resource cost function and a player-specific constant via a specified operation (e.g. addition or multiplication). They show that this restriction is sufficient to guarantee the existence of pure Nash-equilibria in games with three players.
Dunkel and Schulz~\cite{DS06} show that the decision problem whether a weighted network congestion game possesses a pure Nash equilibrium is NP-hard. For player-specific network congestion games, Ackermann and Skopalik~\cite{AS08} show that this problem is NP-complete both in directed and in undirected graphs.

Banner and Orda~\cite{BA07} introduce the class of bottleneck congestion games and study their applicability in network routing scenarios. In particular, they derive bounds on the price of anarchy in network bottleneck congestion games with restricted cost functions and show that there always exists a pure Nash equilibrium which is socially optimal, but that the computation of this equilibrium is NP-hard.
Harks et al.~\cite{HH13} give an overview on bottleneck congestion games and the complexity of computing pure Nash equilibria. Moreover, they show that in matroid bottleneck congestion games even pure strong equilibria, which are stable against coalitional deviations, can be computed efficiently.
Harks et al.~\cite{HA09} introduce the so called Lexicographical Improvement Property, which guarantees the existence of pure Nash equilibria through a potential function argument. They show that bottleneck congestion games fulfill this property.

Feldotto et al.~\cite{FLS16} generalize both variants and investigate the linear combination of standard and bottleneck congestion games. Kukushkin~\cite{KU07} introduces the concept of generalized congestion games in which players may use arbitrary monotonic aggregation functions to calculate their total cost from the costs of their single resources. He shows that, apart from monotonic mappings, additive aggregation functions that fulfill certain restrictions are the only ones for which the existence of PNE can be guaranteed. In a later paper~\cite{KU14}, he elaborates this result by deriving properties for the players' aggregation functions which are sufficient to establish this guarantee.
Another generalization of congestion games is given by Byde et al.~\cite{BPJ09} and Voice at al.~\cite{VPBJ09}. They introduce the model of games with congestion-averse utility functions. They show under which properties pure Nash equilibria exist and give a polynomial time algorithm to compute them.

\subsection{Our Contribution}

We introduce congestion games with $L_p$-aggregation functions 
 and  show that pure Nash equilibria are guaranteed only if either 
 there is one aggregation value $p$ for all players or in the case of matroid congestion games.
 For games with linear cost functions in which a pure Nash equilibrium exists, we derive bounds on the price of anarchy.
 For general games, we show the existence of approximate pure Nash equilibria where the approximation factor scales sublinearly with the size of the largest strategy set. We show that this factor is tight and that it is NP-hard to decide whether there is an approximate equilibrium with a smaller factor. Computing an approximate PNE with that factor is PLS-hard.
 As a positive result, we present two different polynomial time algorithms to compute approximate equilibria in games with linear cost functions. The approximation factors of both methods have a different dependence on the parameters of the game.
 For matroid games, we show the existence of pure Nash equilibria not only for $L_p$-aggregation functions but also seek to extend it to more general aggregation functions. We can characterize the functions that guarantee existence of PNE by certain monotonicity properties.

\subsection{Model/Preliminaries}

A \emph{congestion game with $L_p$-aggregation functions} is a tuple $\Gamma=\left(N,R,\left(\Sigma_i\right)_{i\in N},\right.$ $\left.\left(c_r\right)_{r\in R},(p_i)_{i\in N}\right)$. $N=\left\{1,\dots, n\right\}$ denotes the set of players, $R=\left\{r_1,\dots, r_m\right\}$ the set of resources. For each player $i\in N$,  $\Sigma_i \subseteq 2^R$ denotes the strategy space of player $i$ and $p_i\in \mathbb{R}$, $p_i\geq 1$ denotes the player-specific aggregation value of player $i$. For each resource $r$, $c_r : N \to \mathbb{R}$ denotes the non-decreasing cost function associated to resource $r$.

In a congestion game, the state $S=\left (S_1,\dots, S_n\right )$ describes the situation that each player $i\in N$ has chosen the strategy $S_i\in \Sigma_i$. In state $S$, we define for each resource $r\in R$ by
$n_r(S)=\left|\left \{i\in N~|~r\in S_i\right \}\right|$ the congestion of $r$.	The cost of resource $r$ in state $S$ is defined as $c_r(S)=c_r\left(n_r(S)\right)$. The cost of player $i$ is defined as 
$c_i(S)= \left(\sum_{r\in S_i} c_r(S)^{p_i}\right)^{\frac{1}{p_i}}.$
If for all $i,j\in N$ it holds that $p_i=p_j$, then we call $\Gamma$ a congestion game with identical $L_p$-aggregation functions. 

For a state $S=(S_1,...,S_i,..., S_n)$, we denote by $\left(S_{i}',S_{-i}\right)$ the state that is reached if player $i$ plays strategy $S_i'$ while all other strategies remain unchanged. 
A state $S=\left (S_1,\dots, S_n\right)$ is called a {\em pure Nash equilibrium (PNE)} if for all $i\in N$ and all $S'_i\in \Sigma_i$ it holds that $c_i(S)\leq  c_i(S'_i,S_{-i})$ and a {\em $\beta$-approximate pure Nash equilibrium} for a $\beta \ge 1$ if for all $i\in N$ and all $S'_i\in \Sigma_i$ it holds that $c_i(S)\leq \beta\cdot c_i(S'_i,S_{-i})$.
Additionally, we define the Price of Anarchy as the worst-case ratio between the costs in any equilibria and the minimal possible costs in the game. Formally, it is given by  $\frac{\max_{S \in \mathcal{PNE}}\sum_{i\in N}c_i(S)}{\min_{S \in \mathcal{S}}\sum_{i\in N}c_i(S^*)}$. A game is called $(\lambda, \mu)$-smooth for $\lambda >0$ and $\mu \leq 1$ if, for every pair of states $S$ and $S'$, we have $\sum_{i \in N}c_i(S'_i, S_{-i}) \leq \lambda \sum_{i\in N}c_i(S') + \mu \sum_{i\in N}c_i(S)$. In a $(\lambda, \mu)$-smooth game, the Price of Anarchy is at most $\frac{\lambda}{1-\mu}$~\cite{RU09}.

\section{Existence and Efficiency of Pure Nash Equilibria}

We begin with an easy observation that congestion games in which all players use the same $L_p$-norm as aggregation function always possesses a PNE.
\begin{proposition}
	\label{p-equal}
	Let $\Gamma$ be a congestion game with identical $L_p$-aggregation functions. Then $\Gamma$ possesses at least one pure Nash equilibrium.
\end{proposition}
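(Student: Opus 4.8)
The plan is to reduce the game to an ordinary (sum-aggregation) congestion game via a monotone transformation of the cost functions. First I would define a modified congestion game $\Gamma'$ on the same set of players, resources, and strategy spaces, but with resource cost functions $\tilde{c}_r(k) := c_r(k)^{p}$, where $p$ is the common aggregation value shared by all players. Since each $c_r$ is non-decreasing and $p \ge 1$, each $\tilde{c}_r$ is again non-decreasing, so $\Gamma'$ is a legitimate standard congestion game in the sense of Rosenthal~\cite{RO73}.

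Next I would relate the two cost structures. In $\Gamma'$ the sum-aggregated cost of player $i$ in any state $S$ equals $\sum_{r \in S_i} \tilde{c}_r(S) = \sum_{r \in S_i} c_r(S)^{p} = c_i(S)^{p}$. Because resource costs are non-negative and $x \mapsto x^{1/p}$ is strictly increasing on $[0,\infty)$, for any deviation $S_i'$ of player $i$ we have $c_i(S) \le c_i(S_i', S_{-i})$ if and only if $c_i(S)^{p} \le c_i(S_i', S_{-i})^{p}$, i.e.\ if and only if the corresponding inequality holds for the costs in $\Gamma'$. Hence a state is a pure Nash equilibrium of $\Gamma$ exactly when it is a pure Nash equilibrium of $\Gamma'$.

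Finally I would invoke the classical result of Rosenthal~\cite{RO73} that every standard congestion game admits a pure Nash equilibrium; equivalently, $\Gamma'$ possesses the exact potential $\Phi(S) = \sum_{r\in R}\sum_{k=1}^{n_r(S)} \tilde{c}_r(k)$, whose minimizer is an equilibrium. Applying this to $\Gamma'$ yields an equilibrium of $\Gamma'$, which by the equivalence above is also an equilibrium of $\Gamma$, completing the argument.

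I do not expect a genuine obstacle here; the statement is essentially immediate once the transformation is set up. The one point deserving care is that the argument crucially relies on a \emph{single} exponent $p$: the map $c_r \mapsto c_r^{p}$ must be identical across players so that the resulting game is an ordinary congestion game governed by one common potential. If the $p_i$ differed between players, each player would induce a different transformed cost on the shared resources and no single potential function would capture all best responses simultaneously. This is precisely why the identical-$L_p$ hypothesis is needed, and it foreshadows the non-existence phenomena that arise for heterogeneous aggregation values.
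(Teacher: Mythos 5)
Your proof is correct and follows essentially the same route as the paper's own argument: transform the game into a standard congestion game with cost functions $c_r(k)^p$, observe that the strictly increasing map $x\mapsto x^{1/p}$ preserves all best-response (in)equalities, and invoke Rosenthal's potential to obtain an equilibrium that carries back to $\Gamma$. Your additional remarks on why a single common exponent $p$ is essential are a nice complement but do not change the substance.
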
  

However, if players are heterogeneous in the sense that they use different aggregation functions a PNE might not exist even for two player games.

\begin{theorem}
\label{t-notexist}
	For every $1\leq p_1<p_2$ there exists a 2-player congestion game with $L_p$-aggregation functions $\Gamma$ that does not possess a pure Nash equilibrium.
\end{theorem}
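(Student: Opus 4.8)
The plan is to exhibit, for each pair $1 \le p_1 < p_2$, an explicit gadget: a two-player game in which each player has exactly two strategies, and to show that the best-response graph on the four resulting states is a single directed $4$-cycle, so that no state is stable. Since the cost of a player on a single resource is independent of the aggregation exponent (the $L_p$ norm of a one-entry vector is just that entry), the exponents $p_1,p_2$ can only matter if at least one strategy bundles several resources; and since players that share no resource decouple completely (each then has a dominant strategy and a PNE trivially exists), the two players must interact through shared resources whose congestion changes as the opponent moves in or out. Both features, multi-resource strategies and congestion externalities, are therefore built into the gadget.

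The core of the gadget is the contrast between a single \emph{expensive} resource and a bundle of two \emph{moderate} resources. For a vector $x$ with at least two strictly positive entries the map $p \mapsto \left(\sum_j x_j^{p}\right)^{1/p}$ is strictly decreasing, so $\|x\|_{p_1} > \|x\|_{p_2}$; intuitively the smaller exponent $p_1$ behaves more like the sum and thus penalises a two-resource bundle more heavily, whereas the larger exponent $p_2$ behaves more like the maximum and is comparatively indifferent to the second entry. I would therefore choose a moderate cost value $b$ and an expensive cost value $a$ (both as functions of $p_1,p_2$) so that a bundle of two resources of cost $b$ is preferred to a single resource of cost $a$ under $L_{p_2}$ but not under $L_{p_1}$, i.e. $(2 b^{p_2})^{1/p_2} < a < (2 b^{p_1})^{1/p_1}$; the strict gap $\|(b,b)\|_{p_1} > \|(b,b)\|_{p_2}$ guarantees that such an $a$ exists. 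The opponent's moves are then used to raise the relevant resource costs from their congestion-$1$ to their congestion-$2$ values, flipping each player's preferred strategy at the right moment so that the four improving deviations chain into the cycle
\[ (A,A) \xrightarrow{\,1\,} (B,A) \xrightarrow{\,2\,} (B,B) \xrightarrow{\,1\,} (A,B) \xrightarrow{\,2\,} (A,A), \]
where the arrow labelled $i$ denotes a strict unilateral improvement by player $i$.

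Concretely I would reduce the existence of a PNE to the four strict inequalities $c_1(B,A) < c_1(A,A)$, $c_2(B,B) < c_2(B,A)$, $c_1(A,B) < c_1(B,B)$, and $c_2(A,A) < c_2(A,B)$, each of which, after substituting the congestion-dependent costs, becomes a comparison between an $L_{p_i}$ norm of a two-entry vector and a scalar. Verifying that all four can hold simultaneously, with every resource cost function non-decreasing in congestion and with the cost values chosen continuously in $(p_1,p_2)$, is the main obstacle: the two players aggregate the very same congestion profiles with different exponents, so the inequalities pull in opposite directions and the feasible window for the parameters $a,b$ (together with any auxiliary padding resources needed to make the four states comparable while respecting monotonicity) must be shown to be non-empty for every admissible pair, including the boundary cases $p_1 = 1$ and $p_2$ arbitrarily close to $p_1$. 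I expect the cleanest route is to fix the congestion-$1$ and congestion-$2$ values of each resource symbolically, rewrite the four conditions as strict norm comparisons, and invoke the strict monotonicity of $p \mapsto \|x\|_p$ to open a non-degenerate parameter region in which all four inequalities hold at once.
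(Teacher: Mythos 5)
Your proposal is correct and follows essentially the same route as the paper's proof: a matching-pennies gadget in which the low-exponent player prefers to share one expensive resource while the high-exponent player prefers to share a bundle of cheap ones, with the feasible window $(2b^{p_2})^{1/p_2} < a < (2b^{p_1})^{1/p_1}$ opened by strict monotonicity of $p \mapsto \|x\|_p$. This is exactly the paper's parameter choice $b=1$ and $a = 2^{1/z}$ with $z = \frac{p_1+p_2}{2}$, so that $2^{1/p_2} < 2^{1/z} < 2^{1/p_1}$.

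The one step you flag as the ``main obstacle'' --- verifying that all four cycle inequalities hold simultaneously, possibly with padding resources --- is in fact no obstacle at all, and it is worth seeing how the paper's construction makes it collapse. Let every resource cost $0$ at congestion $1$, give player 1 the strategies $\{r_1,r_3,r_5\}$ and $\{r_2,r_4,r_6\}$, player 2 the strategies $\{r_1,r_3,r_6\}$ and $\{r_2,r_4,r_5\}$, and set $c_{r_j}(2)=1$ for $j\leq 4$ and $c_{r_5}(2)=c_{r_6}(2)=2^{1/z}$. Because unshared resources are free, each player's cost in any state depends only on the shared resources, and by the symmetry of the strategy sets every one of the four states is of one of just two types: either the players share the pair $\{r_1,r_3\}$ or $\{r_2,r_4\}$ (player $i$ pays $2^{1/p_i}$), or they share the single resource $r_5$ or $r_6$ (both pay $2^{1/z}$). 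Your four inequalities thus reduce to the two comparisons $2^{1/z} < 2^{1/p_1}$ and $2^{1/p_2} < 2^{1/z}$, both immediate from $p_1 < z < p_2$; each player can always unilaterally toggle the state type, so the best-response graph is the $4$-cycle you describe. Monotonicity of the cost functions is trivially respected, no auxiliary padding is needed beyond the zero-cost entries, and the boundary cases $p_1=1$ or $p_2$ close to $p_1$ cause no difficulty, since the window is non-empty whenever $p_1<p_2$.
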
 

Now we will investigate the efficiency of the equilibria  by analyzing the price of anarchy. We restrict ourselves to games with linear cost functions and make use  of a previous result by Christodolou et al.~\cite{CH11}.
Let $q:=max_{i\in N}\ p_i$, let $d:=max_{i\in N,S_i\in\Sigma_i}|S_i|$.
Furthermore, let  $z=\left\lfloor \frac{1}{2}\cdot \left(d^{1-\frac{1}{q}}+\sqrt{5+6\cdot\left(d^{1-\frac{1}{q}}-1\right)}\right.\right.$ $\left.\left.\overline{+\left(d^{1-\frac{1}{q}}-1\right)^2}\right)\right\rfloor$ be the maximum integer such that $\frac{z^2}{z+1}\leq d^{1-\frac{1}{q}}$.

\begin{theorem}
	\label{PoA}
	A congestion game with $L_p$-aggregation functions $\Gamma$ in which all cost functions are linear  is $\left(d^{1-\frac{1}{p}}\cdot\frac{z^2+3z+1}{2z+1},d^{1-\frac{1}{p}}\cdot\frac{1}{2z+1}\right)$-smooth.
	If $\Gamma$ possesses a pure Nash equilibrium, its price of anarchy is bounded by 
	$d^{1-\frac{1}{p}}\cdot\frac{z^2+3z+1}{2z+1-d^{1-\frac{1}{p}}}$.	
\end{theorem}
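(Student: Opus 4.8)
The plan is to establish the two stated $(\lambda,\mu)$-smoothness parameters and then invoke the smoothness bound of~\cite{RU09}. Throughout write $D:=d^{1-\frac{1}{q}}$ for the quantity denoted $d^{1-\frac{1}{p}}$ in the statement (recall $q=\max_i p_i$), and set $\alpha:=\frac{z^2+3z+1}{2z+1}$ and $\beta:=\frac{1}{2z+1}$, so that the asserted parameters are exactly $\lambda=\alpha D$ and $\mu=\beta D$.

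First I would relate each player's $L_{p_i}$-cost to the additive (``$L_1$'') cost $\tilde c_i(T):=\sum_{r\in T_i}c_r(T)$. For any state $T$ and player $i$, monotonicity of $\ell_p$-norms in $p$ together with the power-mean inequality gives
\[
 c_i(T)\ \le\ \tilde c_i(T)\ \le\ |T_i|^{1-\frac{1}{p_i}}\,c_i(T)\ \le\ D\,c_i(T),
\]
where the last step uses $|T_i|\le d$ and $p_i\le q$. This two-sided bound is the only place where the aggregation enters, and it is what introduces the factor $D$ into both smoothness parameters.

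Next comes the smoothness reduction, for which I would use the per-resource inequality of Christodoulou et al.~\cite{CH11}: for all non-negative integers $x,y$ and every integer $z\ge 1$,
\[
 (x+1)\,y\ \le\ \alpha\,y^2+\beta\,x^2 .
\]
(This can also be checked directly: viewed as a quadratic in $x$ its discriminant is $y\big(4(2z+1)-(8z+3)y\big)$, which is negative for $y\ge 2$, while for $y=1$ the quadratic factors as $(x-z)\big(x-(z+1)\big)$ and is therefore non-negative at every integer $x$; this is where the two worst congestion values $z$ and $z+1$ come from.) Using $c_i\le\tilde c_i$ on the deviations, the standard observation that a single deviation raises any congestion by at most one, i.e. $n_r(S_i',S_{-i})\le n_r(S)+1$, and writing linear costs as $c_r(x)=a_rx+b_r$, I would obtain
\[
 \sum_{i\in N}c_i(S_i',S_{-i})\ \le\ \sum_{r\in R}a_r\,(n_r(S)+1)\,n_r(S')+\sum_{r\in R}b_r\,n_r(S').
\]
Applying the inequality resource-wise with $x=n_r(S)$, $y=n_r(S')$ (the constant parts $b_r$ are absorbed because $\alpha\ge 1$) yields a bound of the form $\alpha\sum_i\tilde c_i(S')+\beta\sum_i\tilde c_i(S)$, and finally converting back with $\tilde c_i\le D\,c_i$ produces $\alpha D\sum_i c_i(S')+\beta D\sum_i c_i(S)$. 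This is precisely $(\alpha D,\beta D)$-smoothness, the first claim.

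It remains to derive the price of anarchy and to justify the particular $z$. By~\cite{RU09} the price of anarchy is at most $\frac{\lambda}{1-\mu}=D\cdot\frac{z^2+3z+1}{2z+1-D}$, which requires $\mu<1$, i.e. $D<2z+1$. The main obstacle is the discrete optimization: the smoothness above holds for \emph{every} integer $z\ge1$, so one must show that the $z$ fixed before the theorem minimizes the resulting bound. I would compare consecutive values of $h(z):=\frac{z^2+3z+1}{2z+1-D}$; a short computation shows that the numerator of $h(z+1)-h(z)$ equals $2\big((z+1)^2-D(z+2)\big)$, so its sign flips exactly at the threshold $\frac{(z+1)^2}{z+2}$. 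Since this threshold is strictly increasing in $z$, $h$ is unimodal and its minimizer is the largest integer $z$ with $\frac{z^2}{z+1}\le D$ — precisely the $z$ of the statement, whose closed form follows by solving $z^2=D(z+1)$. Finally, maximality of $z$ gives $D<\frac{(z+1)^2}{z+2}\le 2z+1$, which simultaneously secures $\mu<1$ and shows the denominator $2z+1-D$ is positive. As a sanity check, for $D=1$ (identical $L_1$ aggregation) this recovers $z=1$ and the classical bound $\frac{5}{2}$.
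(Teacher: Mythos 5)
Your proof is correct and follows essentially the same route as the paper's: bound each player's $L_{p_i}$-cost by the $L_1$-cost from above and by $d^{-(1-\frac{1}{q})}$ times it from below, apply the smoothness inequality of Christodoulou et al.~\cite{CH11} for linear congestion games to the $L_1$-costs, convert back to obtain $\left(d^{1-\frac{1}{q}}\cdot\frac{z^2+3z+1}{2z+1},\,d^{1-\frac{1}{q}}\cdot\frac{1}{2z+1}\right)$-smoothness, and invoke the $\frac{\lambda}{1-\mu}$ bound of~\cite{RU09}. The only differences are that you prove, rather than cite, the ingredients the paper delegates to~\cite{CH11} — the per-resource inequality $(x+1)y\le\frac{z^2+3z+1}{2z+1}y^2+\frac{1}{2z+1}x^2$, the unimodality of $z\mapsto\frac{z^2+3z+1}{2z+1-D}$ showing the stated $z$ is optimal, and the bound $D<\frac{(z+1)^2}{z+2}\le 2z+1$ guaranteeing $\mu<1$ — and that you correctly resolve the paper's notational slip by reading $d^{1-\frac{1}{p}}$ in the theorem as $d^{1-\frac{1}{q}}$ with $q=\max_{i\in N}p_i$, which is indeed the exponent the per-player norm comparison actually supports.
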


\section{Existence of Approximate Pure Nash Equilibria}

We start by giving a bound depending on the minimal and maximal $p_i$-values that players use and the maximal number of resources in a strategy.

\begin{theorem}
	\label{Lp-approx}
	Let $\Gamma$ be a congestion game with $L_p$-aggregation functions, let $p=\min_{i\in N}p_i$ be the minimal and $q=\max_{i\in N}p_i$ be the maximal aggregation value in the game. Furthermore, denote by $d=max_{i\in N,S_i\in\Sigma_i}|S_i|$ the size of the strategy that contains most resources. Then $\Gamma$ contains a $\beta$-approximate equilibrium for $\beta=d^{\frac{1}{2}\cdot\left(\frac{1}{p}-\frac{1}{q}\right)}$. Moreover, a $\beta$-approximate equilibrium will be reached from an arbitrary state after a finite number of $\beta$-improvement steps.
\end{theorem}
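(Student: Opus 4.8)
The plan is to establish a potential-function argument for a carefully chosen \emph{reference} aggregation exponent $s$ that sits ``between'' $p$ and $q$, so that the unavoidable distortion incurred when a player measures her cost with $L_{p_i}$ instead of $L_s$ is balanced symmetrically across all players. Concretely, I would set $s$ via $\frac{1}{s}=\frac{1}{2}\left(\frac{1}{p}+\frac{1}{q}\right)$ (note $s\geq 1$ since $p,q\geq 1$) and define the Rosenthal-type potential $\Phi(S)=\sum_{r\in R}\sum_{k=1}^{n_r(S)}c_r(k)^s$ of the auxiliary standard congestion game whose resource costs are the non-decreasing functions $c_r(\cdot)^s$. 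By the classical Rosenthal property, a unilateral deviation of player $i$ from $S_i$ to $S_i'$ changes $\Phi$ by exactly $\hat c_i(S_i',S_{-i})-\hat c_i(S)$, where $\hat c_i(S)=\sum_{r\in S_i}c_r(n_r(S))^s=\gamma_i(S)^s$ is the $s$-th power of the surrogate cost $\gamma_i(S)=\big(\sum_{r\in S_i}c_r(S)^s\big)^{1/s}$ that player $i$ would incur under the $L_s$ norm.

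The second ingredient is the norm-equivalence inequality $\lVert x\rVert_b\leq \lVert x\rVert_a\leq k^{\frac1a-\frac1b}\lVert x\rVert_b$ for $1\le a\le b$ and $x\in\mathbb{R}^k$, applied to the vector of resource costs on player $i$'s strategy, whose length is $|S_i|\le d$. Relating the surrogate cost $\gamma_i$ (an $L_s$ norm) to the true cost $c_i$ (an $L_{p_i}$ norm) requires a case distinction: if $p_i\le s$ then $\gamma_i(S)\le c_i(S)\le d^{\frac{1}{p_i}-\frac1s}\gamma_i(S)$, while if $p_i\ge s$ then $c_i(S)\le\gamma_i(S)\le d^{\frac1s-\frac{1}{p_i}}c_i(S)$. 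In either case $\gamma_i(S)=f_i(S)\,c_i(S)$ for a distortion factor $f_i(S)$ confined to an interval $[L_i,U_i]$, and a short computation gives $U_i/L_i=d^{\frac{1}{p_i}-\frac1s}$ when $p_i\le s$ and $U_i/L_i=d^{\frac1s-\frac{1}{p_i}}$ when $p_i\ge s$.

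With these bounds in hand I would show that every $\beta$-improvement step strictly decreases $\Phi$. If player $i$ performs such a step from $S$ to $S'=(S_i',S_{-i})$, then by definition $c_i(S')<c_i(S)/\beta$, hence $\gamma_i(S')\le U_i\,c_i(S')<\frac{U_i}{\beta}c_i(S)\le \frac{U_i}{\beta}\cdot\frac{\gamma_i(S)}{L_i}$; choosing $\beta\ge U_i/L_i$ forces $\gamma_i(S')<\gamma_i(S)$ and therefore $\Phi(S')<\Phi(S)$. It thus suffices to take $\beta=\max_i U_i/L_i$, and the key point is that the reference exponent $s$ was selected precisely so that the two extreme cases coincide: $\frac1p-\frac1s=\frac1s-\frac1q=\frac12\left(\frac1p-\frac1q\right)$, giving $\beta=d^{\frac12(\frac1p-\frac1q)}$ as claimed, with all intermediate players yielding a smaller ratio. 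Since the state space is finite, $\Phi$ assumes only finitely many values and strictly decreases along $\beta$-improvement steps, so after finitely many such steps a state with no available $\beta$-improvement remains, i.e.\ a $\beta$-approximate equilibrium.

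I expect the main obstacle to be the two-sided control of the distortion factor together with the optimal placement of the reference exponent $s$: a naive choice such as $s=p$ only yields the weaker bound $d^{\frac1p-\frac1q}$, and obtaining the factor-of-two improvement in the exponent hinges on balancing the over- and under-estimation of $\gamma_i$ against $c_i$ symmetrically via $\frac1s=\frac12(\frac1p+\frac1q)$. Care is also needed to keep the inequalities strict so that $\Phi$ genuinely decreases, and to verify $s\geq 1$ so that $c_r(\cdot)^s$ is a legitimate non-decreasing cost function for the auxiliary game.
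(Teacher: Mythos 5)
Your proposal is correct and follows essentially the same route as the paper's proof: the same Rosenthal-type potential $\Phi(S)=\sum_{r\in R}\sum_{k=1}^{n_r(S)}c_r(k)^z$ with the same balanced reference exponent $\frac{1}{z}=\frac{1}{2}\left(\frac{1}{p}+\frac{1}{q}\right)$, and the same key step that a $\beta$-improvement in true $L_{p_i}$ cost forces a strict decrease of the $L_z$ surrogate cost and hence of $\Phi$. Your packaging of the case analysis as a two-sided norm-equivalence bound $\lVert x\rVert_b\leq\lVert x\rVert_a\leq k^{\frac{1}{a}-\frac{1}{b}}\lVert x\rVert_b$ is just a cleaner formulation of the power-mean inequalities the paper derives explicitly.
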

In the proof 	we show that $\Phi(S)=\sum_{r\in R}\sum_{i=1}^{n_r(S)}c_r(i)^z$ is an approximate potential function where  $z:=\left(\frac{1}{2}\cdot \left(\frac{1}{p}+\frac{1}{q}\right)\right)^{-1}$.
We will now complement this result by showing that this approximation quality is the best achievable: i.e., we show that for any given $p<q$ and $d\geq 2$ we can construct a game
which possesses no $\beta$-approximate PNE for any $\beta<d^{\frac{1}{2}\left(\frac{1}{p}-\frac{1}{q}\right)}$.
\begin{theorem}
	\label{Lpnonapprox}
	Let $p, q, d\in \mathbb{N}$ with $p<q$ and $d\geq 2$. Then there is a congestion game with $L_p$-aggregation functions $\Gamma$ with $N=\{1,2\}$, $p_1=p$, $p_2=q$, and $d=max_{i\in N,S_i\in\Sigma_i}|S_i|$ such that $\Gamma$ does not possess a $\beta$-approximate pure Nash equilibrium for any $\beta<d^{\frac{1}{2}\left(\frac{1}{p}-\frac{1}{q}\right)}$.
\end{theorem}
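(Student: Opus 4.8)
The plan is to exhibit a single explicit two-player instance $\Gamma$ with $p_1=p$, $p_2=q$ and $\max_{i\in N,S_i\in\Sigma_i}|S_i|=d$ in which \emph{every} state admits a unilateral deviation that improves the deviating player's cost by a factor of at least $\beta:=d^{\frac12(\frac1p-\frac1q)}$. Since $\Gamma$ is finite, this immediately yields the claim: if some state $S$ were a $\beta'$-approximate PNE for a $\beta'<\beta$, then the deviation guaranteed in $S$ would give a player $i$ with $c_i(S)\ge\beta\,c_i(S_i',S_{-i})>\beta'\,c_i(S_i',S_{-i})$, contradicting the defining inequality $c_i(S)\le\beta'\,c_i(S_i',S_{-i})$. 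So the whole statement reduces to constructing $\Gamma$ and checking the per-state improvement bound, which I would organize around a best-response cycle through the (constantly many) states, each state being left by the move of whichever player is dissatisfied in it; two strategies per player will suffice.

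The quantitative heart is the tightness of the norm inequalities that drive the upper bound of Theorem \ref{Lp-approx}. With $z:=\bigl(\tfrac12(\tfrac1p+\tfrac1q)\bigr)^{-1}$ one has $p<z<q$ and, crucially, $\tfrac1p-\tfrac1z=\tfrac1z-\tfrac1q=\tfrac12(\tfrac1p-\tfrac1q)$. For a vector $x$ of $d$ resource costs the power-mean inequalities give $\|x\|_p\le d^{1/p-1/z}\|x\|_z$ and $\|x\|_z\le d^{1/z-1/q}\|x\|_q$, with equality \emph{exactly} when all $d$ entries are equal, and in both cases the constant equals $\beta$. I would therefore build $\Gamma$ so that each player chooses between a ``concentrated'' bundle and a ``spread'' bundle of $d$ resources carrying equal costs, and tune the cost functions so that each improving move in the cycle switches a player between these two shapes and realizes the extremal factor $\beta$ exactly. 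The natural bookkeeping device is the approximate potential $\Phi(S)=\sum_{r\in R}\sum_{i=1}^{n_r(S)}c_r(i)^{z}$ from Theorem \ref{Lp-approx}: I would pick the congestion values so that $\Phi$ is \emph{constant} along the cycle, which is precisely the borderline case in which Theorem \ref{Lp-approx} no longer forces progress and which pins each per-move ratio to exactly $\beta$ rather than anything smaller.

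The main obstacle is the existence of the cycle itself. Since the $c_r$ are non-decreasing, the bare interaction biases both players toward \emph{anti-coordination} (avoiding shared resources), and anti-coordination games always possess a pure equilibrium; indeed a naive instance in which the players merely compete for one cheap resource and one spread bundle can be checked to always admit a PNE, because the monotonicity of the costs forces the two players' best-response thresholds to be nested. The construction must therefore use the asymmetry $p<q$ together with the nonlinearity of the aggregation---exactly the mechanism already underlying Theorem \ref{t-notexist}---to turn the effective best responses into a genuine cycle. Concretely, a single strategy switch by one player has to flip the other player's preference, which forces that switch to alter the congestion on a $\Theta(d)$-sized part of the other player's spread bundle; I would design the resource-sharing pattern (generalizing the no-PNE instance of Theorem \ref{t-notexist} while keeping every spread bundle uniform and of size exactly $d$) so that this flip has magnitude $\beta$ for both players simultaneously. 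The remaining work is routine but must be carried out for every state: evaluate the $L_p$- and $L_q$-costs in each configuration and verify, via the equal-cost tightness above, that the dissatisfied player gains a factor of at least $\beta$. Collecting these verifications shows that no state is a $\beta'$-approximate PNE for any $\beta'<\beta$.
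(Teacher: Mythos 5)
Your plan coincides with the paper's own proof: the paper constructs exactly the instance your constraints pin down, namely a generalization of the game from Theorem~\ref{t-notexist} to strategies of size $d$, in which every state has the two players sharing either $d-1$ unit-cost resources (so each player's cost vector is uniform, with costs $d^{\frac{1}{p}}$ and $d^{\frac{1}{q}}$) or a single expensive resource of cost $d^{\frac{1}{z}}$ (concentrated), which keeps the $L_z$-potential $\Phi$ constant across all four states and makes every improving move an exact factor-$d^{\frac{1}{2}\left(\frac{1}{p}-\frac{1}{q}\right)}$ improvement for the deviating player. The state-by-state verification you defer as routine is precisely the paper's short computation, so the proposal is correct and takes essentially the same approach.
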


We complete the discussion of approximate pure Nash equilibria by regarding the computational complexity of deciding whether an approximate PNE exists for any approximation factor smaller than $\beta$.

\begin{theorem}
	\label{Lpapproxcomp}
	For any $p,q,d\in \mathbb{N}$ with $p<q$ and $d\geq 2$ it is NP-hard to decide whether a given congestion game with $L_p$-aggregation functions $\Gamma$, with $p\leq p_i\leq q$ for all $i\in N$, and $d=\max_{i\in N, S_i\in \Sigma_i} |S_i|$ possesses a $\beta$-approximate pure Nash equilibrium for any $\beta<d^{\frac{1}{2}\cdot\left(\frac{1}{p}-\frac{1}{q}\right)}$.
\end{theorem}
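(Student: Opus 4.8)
\section*{Proof proposal for Theorem~\ref{Lpapproxcomp}}

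The plan is to reduce from a suitable NP-complete problem — I would use $3$-SAT, or a bounded-occurrence variant so that the strategy sizes stay controlled — and to build a congestion game whose approximate equilibria encode satisfying assignments. The central idea is to combine two ingredients: first, a collection of \emph{assignment} and \emph{verification} subgames that admits a stable (exact equilibrium) configuration precisely when the chosen truth values satisfy the formula; and second, an embedded copy of the non-existence gadget from Theorem~\ref{Lpnonapprox}, which by construction never admits a $\beta$-approximate equilibrium for any $\beta<d^{\frac{1}{2}\left(\frac{1}{p}-\frac{1}{q}\right)}$. The key is to \emph{couple} these ingredients so that the gadget is active (i.e.\ forced into its equilibrium-free best-response cycle) unless the verification part signals that the current assignment is satisfying, in which case a designated control player can escape to a safe strategy that deactivates the gadget.

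Concretely, I would introduce for each variable $x_j$ a player with two strategies corresponding to its truth values, and for each clause a set of resources whose congestion is low exactly when at least one of its literals is satisfied. I would then add a \emph{lock} mechanism: the two players of the embedded gadget share resources with a control player whose escape route out of the cycle becomes cheap only when all clause resources are uncongested, i.e.\ when the assignment is satisfying. All assignment, verification and control players would be assigned aggregation values in $[p,q]$ and strategies of size at most $d$, while the gadget players keep exactly $p_1=p$ and $p_2=q$, so that the non-existence threshold of the embedded gadget coincides with the global threshold $d^{\frac{1}{2}\left(\frac{1}{p}-\frac{1}{q}\right)}$ and the overall maximal strategy size is exactly $d$.

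For correctness I would argue two directions. If the formula is satisfiable, fix the assignment players to a satisfying assignment; then the verification resources are uncongested, the control player can take its safe strategy, the gadget players settle, and one verifies that the resulting state is an exact PNE, hence a $\beta$-approximate PNE for every $\beta\ge 1$. Conversely, if the formula is unsatisfiable, then in every state at least one clause is violated, the control player's escape stays blocked, and the embedded gadget is trapped in the situation of Theorem~\ref{Lpnonapprox}; hence some gadget player always has a $\beta$-improving deviation for every $\beta<d^{\frac{1}{2}\left(\frac{1}{p}-\frac{1}{q}\right)}$, so no such $\beta$-approximate PNE exists.

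The main obstacle, and where most of the technical work lies, is the coupling step: one must tune the cost functions so that (i) while the lock is engaged no assignment or verification player has a profitable $\beta$-deviation that could accidentally produce a stable state, and (ii) the surrounding game does not perturb the internal cost ratios of the embedded gadget enough to raise its best-response cycle above the threshold. Preserving the exact $d^{\frac{1}{2}\left(\frac{1}{p}-\frac{1}{q}\right)}$ gap of the gadget while keeping every auxiliary strategy of size at most $d$, and doing so \emph{uniformly} for all $\beta$ below the threshold rather than for a single fixed value, is the delicate part of the construction.
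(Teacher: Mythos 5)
Your high-level architecture is exactly that of the paper's proof: embed the equilibrium-free gadget of Theorem~\ref{Lpnonapprox}, add a control (in the paper: ``connection'') player who can either stay out or activate the gadget, and tie the activation to the ``no'' instances of an NP-complete problem, with the ``yes'' instances yielding an exact PNE. The paper reduces from independent set on graphs of maximum degree $d$ rather than from 3-SAT: each of $k$ node players picks the edge set $E_v$ of some node $v$ (automatically of size at most $d$, which is precisely why this source problem is convenient) or a single escape resource $r_c$, and the connection player chooses between $\{r_c\}$ and the pair $\{r_d^1,r_d^2\}$ of expensive gadget resources.

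The genuine gap in your proposal is that the coupling mechanism --- which you yourself identify as ``where most of the technical work lies'' --- is never constructed, and it is essentially the entire content of the proof. In the paper it requires no delicate tuning at all: the costs are staggered by powers of $d$ (an edge resource jumps to $2d^3$ under a conflict, $r_c$ jumps to $2d^2$ when shared), so that (i) two node players sharing an edge resource can improve by a factor of at least $d$ by fleeing to $r_c$, (ii) a shared $r_c$ lets the connection player improve by a factor of at least $d$ by entering the gadget, and (iii) once the connection player sits on $r_d^1,r_d^2$, the two gadget players face exactly the game of Theorem~\ref{Lpnonapprox} and cycle with improvement factor $d^{\frac{1}{2}\left(\frac{1}{p}-\frac{1}{q}\right)}$. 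Since $d\geq d^{\frac{1}{2}\left(\frac{1}{p}-\frac{1}{q}\right)}$, in the ``no'' case every state admits an improvement by at least the threshold factor, so the uniformity over all $\beta$ below the threshold --- the difficulty you flag at the end --- comes for free once the improvement factors are separated by powers of $d$; there is no risk of the surrounding game ``perturbing'' the gadget, because the gadget resources are touched by no one except its two players and the connection player. Your 3-SAT variant could plausibly be completed along the same lines (unsatisfied clause players flood the escape resource), but you would still need a bounded-occurrence variant to keep strategy sizes at most $d$ and a separate treatment of $d=2$ --- the paper faces the same issue and handles $d=2$ by reducing from $IS_3$ with auxiliary node players --- and none of this is actually carried out in your proposal.
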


\section{Computation of Approximate Equilibria} 

In~\cite{SK08} it was shown that it is PLS-hard to compute an $\beta$-approximate PNE in standard congestion games. Since these games are a special case of congestion games with $L_p$-aggregation functions, this negative result immediately carries over to congestion games with $L_p$-aggregation functions.
\begin{proposition}
	It is PLS-hard to compute a $\beta$-approximate pure Nash equilibrium in a congestion game $\Gamma$ with $L_p$-aggregation functions in which all cost functions are non-negative and non-decreasing, for any $\beta$ that is computable in polynomial time. 
\end{proposition}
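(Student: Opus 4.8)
The plan is to treat standard (sum-based) congestion games as a subclass of congestion games with $L_p$-aggregation functions and to transfer the PLS-hardness of Skopalik and Vöcking~\cite{SK08} essentially verbatim. First I would observe that fixing $p_i=1$ for every player $i\in N$ makes the aggregation trivial: the player cost collapses to
$$c_i(S)=\left(\sum_{r\in S_i}c_r(S)^{1}\right)^{\frac{1}{1}}=\sum_{r\in S_i}c_r(S),$$
which is exactly the cost of player $i$ in a standard congestion game. Hence, after this identification, every standard congestion game \emph{is} a congestion game with $L_p$-aggregation functions; moreover the cost functions $c_r$ are carried over unchanged, so the non-negativity and non-decreasing assumptions are automatically satisfied on this subclass. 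Since the player costs agree identically, the relation $c_i(S)\le\beta\cdot c_i(S_i',S_{-i})$ that defines a $\beta$-approximate pure Nash equilibrium is the same in both models, and so is the notion of a $\beta$-improvement step.

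Second, I would phrase both search problems in the standard PLS formulation, where instances are the games, candidate solutions are strategy profiles $S$, the neighborhood of $S$ consists of the profiles reachable by a single (approximately improving) deviation, and the sought local optima are exactly the $\beta$-approximate equilibria. By the result of~\cite{SK08}, for any polynomially computable $\beta$ there is a PLS-reduction from an arbitrary PLS problem $\Pi$ to the problem of computing a $\beta$-approximate PNE in standard congestion games with non-negative, non-decreasing cost functions. The core step is then to compose this reduction with the trivial embedding of standard congestion games into the $L_p$ class obtained by setting all $p_i=1$: the instance map is the identity on standard games, and the solution map is the identity on strategy profiles.

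To complete the argument I would verify that this embedding is itself a PLS-reduction, i.e.\ that it preserves the solution structure. This is immediate from the first paragraph: because the player cost functions and therefore the $\beta$-improvement dynamics of an embedded instance coincide exactly with those of the original standard game, a profile is a $\beta$-approximate equilibrium of the embedded $L_p$-game if and only if it is one of the standard game. Consequently the composition of the~\cite{SK08} reduction with the embedding is a valid PLS-reduction from $\Pi$ to the problem of computing a $\beta$-approximate PNE in congestion games with $L_p$-aggregation functions, which yields the claimed PLS-hardness for every polynomially computable $\beta$.

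I do not expect a genuine obstacle here; the proposition is a specialization/embedding result rather than a new reduction. The only point requiring care is the bookkeeping that the two PLS problem formulations line up under the $p_i=1$ specialization (neighborhoods as approximate improving deviations, solutions as strategy profiles, local optima as approximate equilibria), and this alignment holds because the relevant player cost functions are literally equal.
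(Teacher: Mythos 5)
Your proposal is correct and matches the paper's argument exactly: the paper likewise observes that standard congestion games are the special case $p_i=1$ of congestion games with $L_p$-aggregation functions, so the PLS-hardness result of Skopalik and V\"ocking~\cite{SK08} carries over immediately. Your additional bookkeeping about the PLS problem formulations and the identity reduction is just a more explicit write-up of the same one-line embedding.
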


In the light of this initial negative result, we consider games with restricted cost functions. Caragiannis et al.~\cite{CA11} provide an algorithm that computes approximate pure Nash equilibria for congestion games with polynomial cost functions. For linear costs, the algorithm achieves an approximation quality of $2+\epsilon$. For  polynomial functions with a maximal degree of $g$, the algorithm guarantees an approximation factor of $g^{O(g)}$.
We will reuse the algorithmic idea in two different ways which yield to different approximation guarantees depending on the aggregation parameters $p_i$.

\begin{theorem}
	Let $\Gamma$ be a congestion game with $L_p$-aggregation functions in which all cost functions are linear or polynomial functions of degree at most $g$ without negative coefficients. Furthermore, let $p:=min_{i\in N}\ p_i$, let $q:=max_{i\in N}\ p_i$, let $d:=max_{i\in N,S_i\in\Sigma_i}|S_i|$, and $z=\left(\frac{1}{2}\cdot \left(\frac{1}{p}+\frac{1}{q}\right)\right)^{-1}$.\\
	Then an $\beta$-approximate equilibrium of $\Gamma$ can be computed in polynomial time for $\beta= \min{\left\{(2+\epsilon)\cdot d^{1-\frac{1}{q}},z^{O(1)}\cdot d^{\frac{1}{2}\left(\frac{1}{p}-\frac{1}{q}\right)}\right\}}$ (linear cost functions) and for $\beta=\min{\left\{g^{O(g)}\cdot d^{1-\frac{1}{q}}, (g\cdot z)^{O(g)}\cdot d^{\frac{1}{2}\left(\frac{1}{p}-\frac{1}{q}\right)}\right\}}$ (polynomial cost functions).
\end{theorem}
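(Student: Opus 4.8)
The plan is to reduce to the algorithm of Caragiannis et al.~\cite{CA11} for standard sum-aggregation congestion games in two different ways and then take the better of the two guarantees. The only analytic tool needed is the standard comparison of $L_s$- and $L_t$-norms: for a nonnegative vector $x$ with at most $d$ entries and $1\le s\le t$ one has $\|x\|_t\le\|x\|_s\le d^{\frac{1}{s}-\frac{1}{t}}\|x\|_t$. Writing player $i$'s cost as the $L_{p_i}$-norm of the vector $(c_r(S))_{r\in S_i}$ of at most $d$ resource costs, these inequalities let us transfer an approximate equilibrium computed for one aggregation to the genuine $L_{p_i}$-aggregation, at the price of a multiplicative factor controlled purely by the norm exponents.

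For the first entry of each minimum I would run~\cite{CA11} on the standard congestion game obtained by keeping the resource costs $c_r$ but replacing every player's aggregation by the sum $\tilde c_i(S)=\sum_{r\in S_i}c_r(S)=\|(c_r(S))_{r\in S_i}\|_1$. This is an exact potential game of the same cost degree, so the algorithm returns in polynomial time a $\gamma$-approximate equilibrium $S$ with $\gamma=2+\epsilon$ (linear) resp. $\gamma=g^{O(g)}$ (degree $g$). Using $\|x\|_{p_i}\le\|x\|_1$ in the current state and $\|x\|_1\le d^{1-\frac{1}{p_i}}\|x\|_{p_i}\le d^{1-\frac{1}{q}}\|x\|_{p_i}$ for any deviation, I would chain
$$c_i(S)\le\tilde c_i(S)\le\gamma\,\tilde c_i(S_i',S_{-i})\le\gamma\,d^{1-\frac{1}{q}}\,c_i(S_i',S_{-i}),$$
so that $S$ is a $\gamma\,d^{1-\frac{1}{q}}$-approximate equilibrium of $\Gamma$, giving the first term of each minimum.

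For the second entry I would instead choose the balanced surrogate exponent $z=\left(\frac{1}{2}(\frac{1}{p}+\frac{1}{q})\right)^{-1}$, which satisfies $p\le z\le q$ and equalizes $|\frac{1}{p}-\frac{1}{z}|=|\frac{1}{q}-\frac{1}{z}|=\frac{1}{2}(\frac{1}{p}-\frac{1}{q})$; I round it to an integer $z'$ and form the standard congestion game with resource costs $\bar c_r=c_r^{z'}$, which again have no negative coefficients and have degree $z'$ (linear) resp. $gz'$ (degree $g$). Running~\cite{CA11} yields a state $S$ that is $\gamma'$-approximate for the surrogate sum cost $\sum_{r\in S_i}\bar c_r$, with $\gamma'=z'^{O(z')}$ resp. $(gz')^{O(gz')}$. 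The crucial step is that $t\mapsto t^{1/z'}$ is increasing, so taking $z'$-th roots of $\sum_{r\in S_i}\bar c_r(S)\le\gamma'\sum_{r\in S_i'}\bar c_r(S_i',S_{-i})$ shows that the \emph{same} $S$ is a $(\gamma')^{1/z'}$-approximate equilibrium of the congestion game with identical $L_{z'}$-aggregation functions (well-defined as a potential game by Proposition~\ref{p-equal}); and $(\gamma')^{1/z'}$ collapses the super-exponential factor to $z'^{O(1)}=z^{O(1)}$ resp. $(gz')^{O(g)}=(gz)^{O(g)}$. Converting this $L_{z'}$-approximate equilibrium into an $L_{p_i}$-approximate one exactly as before, the norm comparison now contributes the balanced factor $d^{|\frac{1}{z'}-\frac{1}{p_i}|}\le d^{\frac{1}{2}(\frac{1}{p}-\frac{1}{q})}$, and multiplying the two factors yields the second term.

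The main obstacle is the non-integrality of $z$. Since~\cite{CA11} is stated for polynomial costs, the surrogate power must be an integer $z'$, but rounding perturbs the balanced exponent $\frac{1}{z'}-\frac{1}{p_i}$ away from its optimal value $\frac{1}{2}(\frac{1}{p}-\frac{1}{q})$, and likewise replaces $z$ by $z'$ inside $(\gamma')^{1/z'}$. I would have to verify that both perturbations are absorbed into the hidden constants and the $z^{O(1)}$ / $(gz)^{O(g)}$ prefactors, so that the stated bounds survive; one also checks that the surrogate degree, and hence the running time of~\cite{CA11}, stays polynomial, which holds because $z'\le q+1$.
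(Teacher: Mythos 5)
Your construction is, in both prongs, the same as the paper's proof: run the algorithm of Caragiannis et al.~\cite{CA11} once on the game with the aggregation ignored (the $L_1$ game) and convert the guarantee through $\|x\|_{p_i}\le\|x\|_1\le d^{1-\frac{1}{p_i}}\|x\|_{p_i}$, and once on the game with all cost functions raised to the power $z$, where taking $z$-th roots collapses $z^{O(z)}$ to $z^{O(1)}$ (resp.\ $(gz)^{O(gz)}$ to $(gz)^{O(g)}$) and the $L_z$-to-$L_{p_i}$ comparison contributes $d^{\left|\frac{1}{z}-\frac{1}{p_i}\right|}\le d^{\frac{1}{2}\left(\frac{1}{p}-\frac{1}{q}\right)}$. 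The paper phrases these two conversions as invocations of the proof of Theorem~\ref{Lp-approx}; your norm-comparison formulation is the same computation.

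The one point where you go beyond the paper is the non-integrality of $z$, and there your argument has a genuine gap. You round $z$ to an integer $z'$ and assert that the perturbation is absorbed into the hidden constants and the $z^{O(1)}$, $(gz)^{O(g)}$ prefactors. It is not: the rounding error sits in the exponent of $d$, not in the prefactor. With players of aggregation values $p$ and $q$ present, your argument yields $z'^{O(1)}\cdot d^{e(z')}$ with $e(z')=\max\left\{\frac{1}{p}-\frac{1}{z'},\,\frac{1}{z'}-\frac{1}{q}\right\}=\frac{1}{2}\left(\frac{1}{p}-\frac{1}{q}\right)+\left|\frac{1}{z}-\frac{1}{z'}\right|$, so the loss is exactly the factor $d^{\left|\frac{1}{z}-\frac{1}{z'}\right|}$, a fixed positive power of $d$ whenever $z$ is not an integer, and $d$ is unbounded independently of $z$ and $g$. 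Concretely, for $p=1$ and $q=\frac{7}{3}$ one has $z=\frac{7}{5}$ and target exponent $\frac{1}{2}\left(1-\frac{3}{7}\right)=\frac{2}{7}$, while the best integer surrogate is $z'=2$ with exponent $\frac{1}{2}$; every integer choice therefore loses at least $d^{3/14}$, which no $z^{O(1)}$ or $(gz)^{O(g)}$ term can absorb. The paper does not overcome this either --- it simply declares ``for simplicity $z$ is assumed to be integral'' --- so your proof stands exactly where the paper's does (valid when $z$ is an integer), but your claim to have repaired the general real-$z$ case by rounding is false; within this technique that case remains open. Your side remark that $z'\le q+1$ keeps the transformed degree, and hence the running time of~\cite{CA11}, polynomial is fine.
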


\begin{proof}
	We apply the algorithm proposed by Caragiannis et al.~\cite{CA11} to $\Gamma$, disregarding the aggregation values. Since all cost functions are either linear or polynomial, the algorithm computes either a $(2+\epsilon)$- or a $g^{O(g)}$-approximate equilibrium.
	Now we can use the proof of Theorem \ref{Lp-approx} (for $z=1$). We get that in the state computed by the algorithm, which would be a either $(2+\epsilon)$- or $g^{O(g)}$-approximate PNE if all players used the $L_1$-norm, no player $i$ can improve her costs according to the $L_{p_i}$-norm by more than a factor of either $(2+\epsilon)\cdot d^{1-\frac{1}{p_i}}\leq\beta$ or $g^{O(g)}\cdot d^{1-\frac{1}{p_i}}\leq\beta$ . Hence, the computed state is an $\beta$-approximate pure Nash equilibrium in $\Gamma$. As analyzed in~\cite{CA11}, the running time of the algorithm is polynomial in the size of $\Gamma$ and $\frac{1}{\epsilon}$.
	For the second approximation factor and linear costs functions we replace every $c(x)$ in $\Gamma$ by a polynomial cost function $c'(x)=c(x)^z$ of degree $z$, where for simplicity $z$ is assumed to be integral. For this game, the algorithm given in~\cite{CA11} computes a state $S$ which is a $z^{O(z)}$-approximate equilibrium. The costs of all players are equal to the costs they would have in $\Gamma$ if they accumulated their costs according to the $L_z$-norm without taking the $z$-th root. Following the argumentation of the proof of Theorem \ref{Lp-approx}, we get for any player $i$ and any strategy $S_i'\in \Sigma_i$: 
	\begin{align*}
	\frac{c_i(S)}{c_i(S_i',S_{-i})} \leq \left(\left(z^{O(z)}\right)^\frac{p_i}{z}\right)^\frac{1}{p_i}\cdot\left(d^{\frac{p_i}{z}-1}\right)^\frac{1}{p_i}=z^{O(1)}\cdot d^{\frac{1}{z}-\frac{1}{p_i}}\leq z^{O(1)}\cdot d^{\frac{1}{2}\left(\frac{1}{p}-\frac{1}{q}\right)}.
	\end{align*}
	Obviously, the transformation of the cost functions can be done in polynomial time.
	Hence, the algorithm given in~\cite{CA11} computes a $z^{O(1)}\cdot d^{\frac{1}{2}\left(\frac{1}{p}-\frac{1}{q}\right)}$-approximate equilibrium of $\Gamma$ in polynomial time. 
	For polynomial cost functions this will lead to a game with polynomial cost functions of a degree of at most $g\cdot z$. Hence, the algorithm from~\cite{CA11} computes a $(g\cdot z)^{O(g\cdot z)}$-approximate PNE. Following the reasoning of the proof, we get that the computed state is a $(g\cdot z)^{O(g)}\cdot d^{\frac{1}{2}\left(\frac{1}{p}-\frac{1}{q}\right)}$-approximate PNE of the congestion game with $L_p$-aggregation functions.
\qed
\end{proof}

We have derived two different upper bounds for the approximation quality of approximate equilibria that can be computed in polynomial time. Generally speaking, if $d$ is small but players use high aggregation values, the first strategy yields the better approximation, while otherwise the second bound is better.

\section{General Aggregation Functions in Matroid\\ Games}
\label{aggregationsection} 

In this section we extend our model to a more general class of aggregation functions. Instead of using the $L_p$ norms in the cost functions of the player, they are now defined by $c_i(S)= f_i\left(c_{r_1}(S), c_{r_2}(S), \ldots, c_{r_m}\right)$ with $f_i$ being an arbitrary aggregation function for each player based on certain monotonicity properties. We now consider only matroid congestion games in which the strategy spaces of all players form the bases of a matroid on the set of resources.

Let $f:\mathbb{R}^d\to \mathbb{R}$ be a function that is defined on non-decreasingly ordered vectors.
Let for all $b=(b_1,\dots,b_d)$ and $b'=(b_1',\dots,b_d')$ with $b_i\leq b_i'$ for all $1\leq i\leq d$ hold that $f(b)\leq f(b')$. Then $f$ is called \textbf{strongly monotone}. 
Let $x=(x_1,\dots,x_d)$ and $y=(y_1,\dots,y_d)$ be vectors that differ in only one element, i.e., there are indices $j$ and $k$ such that $x_i=y_i$ for all $i<j$ and $i>k$, $x_j<y_k$, and $x_{i+1}=y_i$ for all $j\leq i< k$ and $f(y)<f(x)$. Furthermore, let there be a vector $z=(z_1,\dots,z_{d-1})$ such that $f(z_1,\dots,x_j,\dots,z_{d-1})<f(z_1,\dots,y_k,\dots,z_{d-1})$ (with $x_j$ and $y_k$ at their correct positions in the non-decreasingly ordered vectors). Then $f$ is called a \textbf{strongly non-monotone} function.
If $f$ is not strongly non-monotone, then $f$ is called a \textbf{weakly monotone} function.
We remark that this definition of vectors that differ in only one element does not require these elements to be at the same position in the vectors (the case $j=k$). It is sufficient that the symmetric difference of the multisets containing all elements in $x$ and $y$ contains exactly two elements $x_j$ and $y_k$.

\begin{theorem}
	\label{Mataggr}
	Let $\Gamma$ be a matroid congestion game in which each player has a personal cost aggregation function $f_i$ according to which her costs are calculated from her single resource costs. If for all players $i\in N$ the aggregation function $f_i$ is strongly monotone, then $\Gamma$ contains a pure Nash equilibrium. 
\end{theorem}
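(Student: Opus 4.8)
The plan is to reduce strategy deviations to single-resource exchanges, exploit the matroid structure to identify each player's best response in terms of fixed weights, and then drive an auxiliary dynamics to termination using a global lexicographic potential. The first step is to fix a state $S$ and, for each player $i$, replace the congestion-dependent costs by fixed player-specific weights $w_i(r) := c_r\left(n_r^{-i}(S)+1\right)$, where $n_r^{-i}(S)$ denotes the number of players other than $i$ using $r$. Since every resource of any basis $T\in\Sigma_i$ is used by player $i$, her cost for deviating to $T$ (with $S_{-i}$ fixed) is exactly $f_i$ applied to the non-decreasingly sorted vector of $\{w_i(r)\mid r\in T\}$. Thus each player simply minimizes a strongly monotone function of a sorted weight vector over the bases of her matroid.

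Next I would invoke two standard matroid facts for the fixed weights $w_i$: the greedy basis is \emph{universally optimal}, i.e.\ it simultaneously minimizes every order statistic of the sorted weight vector, and all minimum-weight bases share this same sorted weight vector. By strong monotonicity a universally optimal basis also minimizes $f_i$, so $S_i$ is a best response if and only if it is a minimum-weight basis for $w_i$. A single valid exchange $e\to g$ (with $e\in S_i$, $g\notin S_i$, and $S_i-e+g$ a basis) merely replaces the entry $w_i(e)$ by $w_i(g)$ in the sorted weight vector; the contrapositive of strong monotonicity forces $w_i(g)<w_i(e)$ whenever the exchange strictly improves, and by the matroid optimality criterion $S_i$ fails to be a minimum-weight basis if and only if some valid exchange has $w_i(g)<w_i(e)$.

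Finally I would introduce the global potential $\Phi(S)$ defined as the non-increasingly sorted vector of the multiset $\{c_r(j)\mid r\in R,\ 1\le j\le n_r(S)\}$; all these vectors have the fixed length $\sum_i |S_i|$ (every basis of player $i$'s matroid has the same size), so they are totally ordered lexicographically. An exchange with $w_i(g)<w_i(e)$ deletes the value $c_e(n_e(S))=w_i(e)$ and inserts the strictly smaller value $c_g(n_g(S)+1)=w_i(g)$, which strictly decreases $\Phi$ lexicographically. I would then run the dynamics ``while some player admits a valid exchange with $w_i(g)<w_i(e)$, perform it''; since $\Phi$ strictly decreases at every step and the state space is finite, the process terminates. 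At a terminal state every $S_i$ is a minimum-weight basis for its $w_i$, hence minimizes $f_i$, so no player has a strictly improving deviation and the state is a pure Nash equilibrium.

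I expect the main obstacle to be the matroid core of the argument: establishing the single-exchange characterization and the universal optimality of the greedy basis, and handling the fact that strong monotonicity yields only weak inequalities. The latter point is precisely why termination must be argued through the strictly decreasing global potential $\Phi$ rather than through per-player strict improvement, and why one needs that all minimum-weight bases carry an identical sorted weight vector in order to equate minimum-weight bases with $f_i$-optimal strategies.
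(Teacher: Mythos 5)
Your proof is correct, but it reaches existence by a different mechanism than the paper. Both arguments hinge on the same core matroid lemma: a minimum-weight basis is universally optimal, i.e., its non-decreasingly sorted weight vector is dominated component-wise by that of every other basis, so strong monotonicity makes it $f_i$-optimal as well; with your weights $w_i(r)=c_r\left(n_r^{-i}(S)+1\right)$ this says exactly that a sum-minimal strategy is a best response under $f_i$. (The paper asserts this lemma without proof just as you do, so relying on it is not a gap.) From there the paper takes a shortcut: it passes to the auxiliary game in which every player aggregates with the $L_1$-norm, invokes Rosenthal's potential to obtain a PNE of that game, and observes that this state is then a PNE of the original game. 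You instead run a weight-decreasing single-exchange dynamics and prove its termination via the lexicographically ordered multiset of cost values, then use the local-exchange optimality criterion for matroid bases to conclude that at a terminal state every $S_i$ is a minimum-weight basis, hence a best response. This is essentially the machinery used for player-specific matroid congestion games (Ackermann, R\"oglin, V\"ocking) and the lexicographical improvement property (Harks et al.): it is heavier, since it needs two matroid facts (the exchange criterion and universal optimality) plus the lexicographic potential, but it buys convergence of a concrete local dynamics from an arbitrary starting state, which the paper's reduction does not provide.

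One blemish: your claim that $S_i$ is a best response \emph{if and only if} it is a minimum-weight basis for $w_i$ fails in the ``only if'' direction, because strong monotonicity permits ties (a constant function $f$ is strongly monotone, and then every basis is a best response regardless of weight). Fortunately you never use that direction: at a terminal state it is the exchange criterion, not the best-response property, that certifies minimum weight, and from there only the ``if'' direction is needed. So the argument stands as written.
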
  

\begin{proof}
	It is sufficient to show that any strategy $S_i=\{r_1,\dots,r_d\}$ that minimizes the sum $\sum_{j=1}^d c_{r_i}(S)$ in a state $S$ also minimizes the cost $f_i\left(c_i(S)\right)$, where $c_i(S)$ denotes the non-decreasingly ordered vector of resource costs of player $i$ in state $S$. Then a pure Nash equilibrium can be computed by computing a PNE in the corresponding game in which all players use the $L_1$-norm.
	We can show that if $B=\{b_1,\dots,b_d\}$ is a matroid basis that is minimal w.r.t. the sum $\sum_{i=1}^d b_i$, then for any other basis $B'=\{b'_1,\dots,b'_d\}$ and all $1\leq i \leq d$ it holds that $b_i\leq b'_i$ (w.l.o.g. assume that both $B$ and $B'$ are written in non-decreasing order).
	Hence, if $B$ is a basis that is optimal w.r.t. sum costs, then for all other bases $B'$ it holds that $f(B)\leq f(B')$, since $f$ is a strongly monotone function.
	\qed
\end{proof}
Since all $L_p$ norms are monotone functions, we can extend this result to congestion games with $L_p$ norms:

\begin{corollary}
	Let $\Gamma$ be a matroid congestion game with $L_p$ norms, then $\Gamma$ contains a pure Nash equilibrium. 
\end{corollary}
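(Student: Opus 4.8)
The plan is to obtain the corollary as an immediate instance of Theorem~\ref{Mataggr} by checking that each player's $L_{p_i}$ norm is a \emph{strongly monotone} aggregation function. Since in a matroid congestion game every strategy of a player is a basis of the same matroid, all of player $i$'s strategies share a common cardinality $d$ equal to the rank of the matroid. Hence the cost aggregation $f_i(b)=\bigl(\sum_{j=1}^d b_j^{\,p_i}\bigr)^{1/p_i}$ is genuinely a function $\mathbb{R}_{\geq 0}^d\to\mathbb{R}$ of the form required by the theorem (with the bottleneck limit $p_i=\infty$ corresponding to $f_i=\max_j b_j$), so the hypotheses of Theorem~\ref{Mataggr} are applicable to $f_i$.

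First I would recall the definition: $f$ is strongly monotone if componentwise domination $b_j\leq b_j'$ for all $j$ implies $f(b)\leq f(b')$. To verify this for the $L_{p_i}$ norm I would use that for $p_i\geq 1$ the map $t\mapsto t^{p_i}$ is non-decreasing on $[0,\infty)$. Thus $b_j\leq b_j'$ gives $b_j^{\,p_i}\leq (b_j')^{p_i}$ for each $j$; summing preserves the inequality, and applying the $p_i$-th root (again non-decreasing on the non-negative reals) yields $f_i(b)\leq f_i(b')$. The bottleneck case is trivial, since the maximum of a componentwise-larger vector can only increase.

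With strong monotonicity of every $f_i$ in hand, Theorem~\ref{Mataggr} applies directly and guarantees a pure Nash equilibrium. Concretely, by the theorem a state in which each player holds a basis minimizing the sum of its resource costs is a PNE; such a state is computable as a PNE of the associated matroid congestion game in which all players use the $L_1$ norm, and the strong monotonicity of the $L_{p_i}$ norms lifts optimality under sum costs to optimality under the $L_{p_i}$ costs.

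There is no substantial obstacle here; the one point that requires care is the non-negativity of resource costs. For non-integer $p_i$ the quantity $b_j^{\,p_i}$ is only well-defined, and $t\mapsto t^{p_i}$ only non-decreasing, on the non-negative reals, so the monotonicity step relies on the costs being non-negative. I would therefore note explicitly that the argument invokes the standing assumption of non-negative cost functions used throughout the paper; under this assumption the corollary follows verbatim from the theorem.
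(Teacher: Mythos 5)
Your proposal is correct and follows exactly the paper's route: the paper derives this corollary in one line by observing that $L_p$ norms are (strongly) monotone and invoking Theorem~\ref{Mataggr}, which is precisely what you do, only with the monotonicity verification spelled out. Your added care about the common basis cardinality, the $p_i=\infty$ case, and non-negativity of costs fills in details the paper leaves implicit but does not change the argument.
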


We have shown that strongly monotone aggregation functions are sufficient to guarantee the existence of a PNE in matroid congestion games with player-specific aggregation functions. This immediately gives rise to the question whe\-ther the monotonicity criterion is also necessary to achieve this guarantee. We investigate this question by examining if, given a non-monotone aggregation function $f$, we can construct a matroid congestion game in which all players use $f$ and which does not contain a PNE.
For singletons, we can immediately give a negative answer to this. Since in this case costs can be associated to single resources and all players use the same aggregation function, Rosenthal's potential function argument~\cite{RO73} is applicable and shows that a PNE necessarily exists.
However, for matroid degrees of at least $2$ the answer is positive if the aggregation function fulfills the property that we call strong non-monotonicity.

\begin{theorem}
	\label{strongnonmonotone}
	Let $f:\mathbb{R}^d\to \mathbb{R}$, $d\geq 2$ be a strongly non-monotone function. Then there is a 2-player matroid congestion game in which both players allocate matroids of degree $d$ and use the aggregation function $f$, which does not contain a pure Nash equilibrium.
\end{theorem}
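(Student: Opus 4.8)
The plan is to build a two-player ``matching pennies'' gadget in which the two opposite preference reversals guaranteed by strong non-monotonicity drive an improvement cycle through all four joint strategies, so that no state is stable. First I would unpack the definition to name the relevant data: let $C=\{x_i : i\neq j\}$ be the $(d-1)$ common entries of $x$ and $y$, so that $x=\mathrm{sort}(C\cup\{x_j\})$ and $y=\mathrm{sort}(C\cup\{y_k\})$ with $x_j<y_k$. Strong non-monotonicity then supplies two inequalities pointing in opposite directions: $f(y)<f(x)$, i.e.\ in the context $C$ the function prefers the larger decisive value $y_k$, and $f(\mathrm{sort}(z\cup\{x_j\}))<f(\mathrm{sort}(z\cup\{y_k\}))$, i.e.\ in the context $z$ it prefers the smaller decisive value $x_j$.

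Next I would realize these two contexts as two players with fixed, private contexts who compete over a shared ``decisive'' pair of resources. Introduce two shared resources $r,s$ with identical non-decreasing cost functions $c_r(n)=c_s(n)=x_j$ for $n\le 1$ and $=y_k$ for $n=2$. Give player $1$ a degree-$d$ partition matroid consisting of $d-1$ forced private resources whose (congestion-independent) costs are exactly the entries of $C$, together with the group $\{r,s\}$ from which she must pick one; give player $2$ the analogous degree-$d$ matroid whose $d-1$ private costs are the entries of $z$. Both players aggregate with $f$. The only interaction is on $\{r,s\}$: if the players pick the \emph{same} shared resource its congestion is $2$ and both decisive values equal $y_k$, whereas if they pick \emph{different} resources each is used alone at congestion $1$ and both decisive values equal $x_j$. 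Hence player $1$'s sorted cost vector is $y$ when they match and $x$ when they mismatch, while player $2$'s is $\mathrm{sort}(z\cup\{y_k\})$ when they match and $\mathrm{sort}(z\cup\{x_j\})$ when they mismatch.

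Finally I would read off the cycle. By $f(y)<f(x)$ player $1$ strictly prefers to match the other player's shared resource, while by $f(\mathrm{sort}(z\cup\{x_j\}))<f(\mathrm{sort}(z\cup\{y_k\}))$ player $2$ strictly prefers to mismatch; this is exactly matching pennies, so each of the four states $(r,r),(r,s),(s,r),(s,s)$ admits a profitable unilateral deviation and no pure Nash equilibrium exists. I expect the routine parts to be immediate: that $\{r,s\}$ together with the singleton groups forms a genuine degree-$d$ partition matroid, and that the cost functions are non-decreasing, which holds since $x_j<y_k$. The one point that deserves care — and that motivates keeping each context \emph{private} rather than toggling it by congestion — is that $C$ and $z$ are in general incomparable vectors, so they cannot both be produced by varying the congestion of a single shared resource while respecting monotonicity; the construction sidesteps this by letting only the single decisive value move with congestion, which is precisely why the sorted cost vectors collapse cleanly to $x,y$ for player $1$ and to $\mathrm{sort}(z\cup\{x_j\}),\mathrm{sort}(z\cup\{y_k\})$ for player $2$. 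The essential content of the theorem is thus the faithful translation of the two opposite preference inequalities into the ``match versus mismatch'' incentives of the gadget.
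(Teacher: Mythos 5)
Your proposal is correct and is essentially the paper's own proof: the paper builds the identical gadget, with two shared resources $r_h,r_t$ costing $x_j$ at congestion $1$ and $y_k$ at congestion $2$, forced private resources carrying the common entries of $x$ and $y$ for player 1 and the entries of $z$ for player 2, and the same matching-pennies incentives (player 1 wants to match by $f(y)<f(x)$, player 2 wants to mismatch by the $z$-context inequality). Your extra remarks — that the groups form a genuine degree-$d$ partition matroid and that $x_j<y_k$ makes the shared cost functions non-decreasing — are correct points the paper leaves implicit.
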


As argued, it is reasonable to demand that the aggregation function $f$ in the proof is strongly non-monotone. We will underline this by showing that the strong non-monotonicity actually is a sharp criterion: i.e., it is both sufficient and necessary to construct a game without a PNE from $f$.

\begin{theorem}
	\label{weaklymonotone}
	Let $f$ be an aggregation function that is weakly monotone and let $\Gamma$ be a matroid congestion game in which all players use $f$ as their aggregation function. Then $\Gamma$ possesses a pure Nash equilibrium. Furthermore, from every state there is a sequence of best-response improvement steps that reaches a pure Nash equilibrium after a polynomial number of steps.
\end{theorem}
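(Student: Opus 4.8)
The plan is to reduce the $f$-game to an ordinary matroid congestion game whose resource costs are a relabelling of the original ones, and then to invoke Rosenthal's potential together with the known fast convergence of best-response dynamics on matroids. First I would extract the combinatorial content of weak monotonicity as a statement about the finitely many cost values $V=\{c_r(k):r\in R,\ 1\le k\le n\}$ occurring in the game. For two values $u,v$ weak monotonicity says that the sign of $f(\dots v\dots)-f(\dots u\dots)$ is the same in every context; I record this as a preference $\preceq$ on $V$, declaring the value that yields the smaller $f$ to be the better one, so that $p\preceq o$ means that replacing $o$ by $p$ (in the multiset sense of the definition, i.e.\ after re-sorting) never increases $f$, in any context. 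Since weak monotonicity fixes a definite direction for every pair, the relation is total by construction.

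The key lemma, and the step I expect to be the main obstacle, is that $\preceq$ is in fact a total \emph{preorder}, i.e.\ transitive. I would prove this by ruling out $3$-cycles: given three values whose real magnitudes are $r_1<r_2<r_3$, if the pairs $(r_1,r_2)$ and $(r_2,r_3)$ both prefer the smaller value, then evaluating $f$ in a common context and chaining the two resulting inequalities forces the pair $(r_1,r_3)$ to prefer the smaller value as well; symmetrically for "prefer larger". A short case check over the eight possible orientations of the three pairs shows that these two forcings eliminate exactly the two cyclic tournaments, so $\preceq$ is acyclic on every triple and hence transitive. This step is genuinely needed because $\preceq$ need not agree with the order of real magnitudes (already a separable $f=\sum_r g(c_r)$ with non-monotone $g$ orders values by $g$-value), which is precisely why the strongly-monotone argument behind Theorem \ref{Mataggr} does not apply directly here.

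Given the preorder, I would choose a weight $w\colon\mathbb{R}\to\mathbb{R}$ with $w(u)\le w(v)\iff u\preceq v$ on $V$ and consider the ordinary matroid congestion game $\tilde\Gamma$ with resource costs $\tilde c_r(k):=w(c_r(k))$. The claim is that best responses coincide in the two games. For fixed congestion let $B$ be a minimum-weight basis for $w$ and $B^{*}$ an arbitrary basis; by the matroid exchange property $B^{*}$ can be transformed into $B$ by a sequence of single exchanges, each replacing an element of $B^{*}$ by a $w$-smaller, i.e.\ $\preceq$-better, element and passing through a basis at every step. By the definition of $\preceq$ each such swap does not increase $f$ in any context, so $f(B)\le f(B^{*})$; hence a $w$-minimal basis also minimises $f$, and the $f$-best responses are exactly the $\sum_r\tilde c_r$-best responses. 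Consequently Rosenthal's potential $\Phi(S)=\sum_{r\in R}\sum_{k=1}^{n_r(S)}\tilde c_r(k)$ of $\tilde\Gamma$, which is an exact potential irrespective of the monotonicity of $\tilde c_r$, certifies that every minimiser of $\Phi$ is a PNE of the $f$-game, yielding existence.

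For the convergence statement I would invoke the polynomial-length best-response result for matroid congestion games of Ackermann et al.: since the best responses of $\Gamma$ and $\tilde\Gamma$ agree, a polynomially short best-response sequence of $\tilde\Gamma$ from any state is simultaneously one for $\Gamma$. The point requiring care here is that $w$, and therefore $\tilde c_r$, may fail to be non-decreasing in the mixed case, whereas the standard matroid convergence bound is usually stated for non-decreasing costs. I would therefore verify that the relevant part of that argument relies only on the matroid exchange structure and the potential $\Phi$ and not on the sign pattern of the costs, or otherwise adapt it to the transformed costs; I regard this as a secondary technical point compared with establishing the transitivity of $\preceq$, which is the conceptual heart of the proof.
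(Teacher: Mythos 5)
Your construction is, at its core, the paper's own proof in a different packaging: your relation $\preceq$ is the paper's relation $\leq'$, your weight function $w$ is the paper's enumeration $\pi$, and Rosenthal's potential of your auxiliary game $\tilde\Gamma$ is exactly the paper's potential $\Phi(S)=\sum_{r\in R}\sum_{i=1}^{n_r(S)}\pi(c_r(S))$. The existence half of your argument is sound. (Incidentally, transitivity of $\preceq$ is not the obstacle you fear: since $u\preceq v$ is defined by a ``for all contexts'' quantifier, chaining $f(\,\cdot\,,u)\leq f(\,\cdot\,,v)\leq f(\,\cdot\,,t)$ inside each fixed context immediately gives $u\preceq t$; no tournament case analysis is needed. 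Totality is where weak monotonicity enters.) The genuine gap is in the convergence half, and it is the claim that the $f$-best responses ``are exactly'' the $\sum_r\tilde c_r$-best responses. Your exchange argument proves only one inclusion: every $w$-minimal basis is $f$-minimal. The converse fails, e.g.\ for $f=\max$ (which is weakly, indeed strongly, monotone): if a player's two bases have cost vectors $(1,5)$ and $(4,5)$, both are $f$-optimal, but only the first is $w$-optimal. Consequently a best-response improvement sequence of $\tilde\Gamma$ is \emph{not} automatically one for $\Gamma$: a step moving that player from $(4,5)$ to $(1,5)$ strictly improves her rank-sum but leaves her $f$-cost unchanged, so it is not an improvement step of $\Gamma$ at all; worse, the state before such a step can already be a PNE of $\Gamma$ while $\tilde\Gamma$ keeps moving. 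Since the theorem demands a polynomial-length sequence of best-response \emph{improvement} steps of $\Gamma$, the black-box transfer (and with it the appeal to Ackermann et al.) breaks down exactly here. This is precisely the subtlety the paper handles by replacing the best response $S_i'$ with a carefully chosen alternative best response $S_i''$.

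The repair is short. Schedule only players who can strictly improve in $\Gamma$: if player $i$ can strictly $f$-improve, her current basis is not $f$-minimal, hence by the contrapositive of your (correct) inclusion it is not $w$-minimal either; let her move to a $w$-minimal basis. That move is a best response of both games, it strictly decreases her $f$-cost (the new basis is $f$-minimal, the old one was not), and it strictly decreases her integer rank-sum, so Rosenthal's potential of $\tilde\Gamma$ --- an exact potential, integral, nonnegative, and bounded by $n^2m^2$ --- drops by at least $1$ per step. This yields the polynomial bound directly, with no need for the matroid convergence theorem of Ackermann et al., and it also dissolves your side-worry about non-decreasing transformed costs, since only integrality of the ranks is used. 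Note that the matroid structure remains essential: it is what your exchange lemma (``$w$-minimal $\Rightarrow$ $f$-minimal'') rests on, mirroring its role in Theorem~\ref{Mataggr}, and without it the reduction to $\tilde\Gamma$ says nothing about $f$.
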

\begin{proof}
	Since $f$ is weakly monotone, we have for all vectors $v_x$ and $v_y$ which differ in exactly one component (let $v_x$ contain $x$ and $v_y$ contain $y$, with $x<y$) that either $f(v_x)\leq f(v_y)$ or $f(v_y)<f(v_x)$ and for any vector $w_y$ that contains $y$ it holds that $f(w_y)\leq f(w_x)$, where $w_x$ results from $w_y$ by replacing $y$ by $x$. Hence, for all pairs $(x,y)$ we have either $f(w_x)\leq f(w_y)$ for all $w_x$ and $w_y$, or $f(w_y)\leq f(w_x)$ for all $w_x$ and $w_y$. This means that if we replace one element by another one in an arbitrary vector, the direction in which the value of $f$ changes (if it changes at all) depends only on the two exchanged elements, not on the rest of the vector.
	Based on this, we define the relation $\leq'$ on the real numbers by determining that
	$x\leq' y$ if and only if $f(w_x)\leq f(w_y)$ for all vectors $w_x$ and $w_y$.
	As argued, this relation defines a total preorder on $\mathbb{R}$. Since the number of resources and players in the game $\Gamma$ are finite, the number of different resource costs that can occur in the game is also finite. Hence, it is possible to enumerate all possible resource costs according to the ordering relation $\leq'$. We denote the position of the cost value $c_r(S)$ in this enumeration by $\pi(c_r(S))$: i.e., $\pi(c_r(S))=1$ if and only if for all $r'\in R$ and all $l\in N$ it holds that $c_r(S)\leq' c_{r'}(l)$. We have that $\pi(c_r(S))=\pi(c_{r'}(S'))$ if and only if $c_r(S)\leq' c_{r'}(S')$ and $c_{r'}(S')\leq' c_{r}(S)$. We remark that $\leq'$ is not necessarily a total order. Thus the two cost values need not be equal in this case.

	Using this, we define the potential function 
	$\Phi(S)=\sum_{r\in R}\sum_{i=1}^{n_r(S)}\pi(c_r(S))$.
	Consider a state $S=(S_i,S_{-i})$ in which player $i$ can improve her cost by deviating to the strategy $S_i'$, yielding the state $S'=(S_i',S_{-i})$. Since $S_i$ and $S_i'$ are both bases of the same matroid $M$, the graph $G=(V,E)$ with $V=(S_i\setminus S_i' \cup S_i'\setminus S_i)$ and $E=\left\{\{r,r'\}~|~r\in S_i, r'\in S_i', S_i'\setminus \{r'\}\cup\{r\}\in M\right\}$ contains a perfect matching (see Corollary 39.12a in~\cite{SC03}).
	All edges in $G$ correspond to resource pairs $\{r,r'\}$ such that $S_i'\setminus \{r'\}\cup\{r\}$ is a valid strategy for player $i$. Since $S_i'$ is a best response strategy to the strategy profile $S_{-i}$ of all other players, it must hold for all edges $\{r,r'\}$ that $f(S_i'\setminus \{r'\}\cup\{r\},S_{-i})\geq f(S_i',S_{-i})$. This implies that either $c_{r'}(S')\leq' c_r(S)$ or $f(S_i'\setminus \{r'\}\cup\{r\},S_{-i})= f(S_i',S_{-i})$ and $c_{r'}(S')>c_r(S)$. In the latter case, the strategy $S_i'\setminus \{r'\}\cup\{r\}$ is still a best-response strategy for player $i$. Repeating the argument yields that there must be a best-response strategy $S_i''$ such that in the graph defined analogously to $G$ it holds for all edges $\{r,r'\}$ that $c_{r'}(S'')\leq' c_r(S)$, where $S''=(S_i'',S_{-i})$.
	
	Let $T=\{e_1,\dots,e_k\}$ be a perfect matching in this graph. 
	For all $\{r,r'\}\in T$ it holds that $c_{r'}(S'')\leq' c_r(S)$, and hence $\pi(c_{r'}(S''))\leq \pi(c_r(S))$.
	We have to argue that $T$ contains at least one edge $\{r,r'\}$ with $\pi(c_{r'}(S''))<\pi(c_r(S))$.
	Assume that for all $\{r,r'\}\in T$ it held that $\pi(c_{r'}(S'))=\pi(c_r(S))$, i.e., $c_r(S)\leq' c_{r'}(S'')$. Then we could transform $S_i$ into $S_i''$ by iteratively exchanging a single resource $r$ for another resource $r'$. Since two consecutive sets $S^r$ and $S^{r'}$ in this sequence differ only in the resources $r$ and $r'$, and $c_r(S)\leq' c_{r'}(S')$, it holds that $f(S^r)\leq f(S^{r'})$. Hence, none of the steps decreases the value of $f$, which contradicts the assumption that $f(S_i'',S_{-i})\leq f(S_i',S_{-i})<f(S_i,S_{-i})$. 
	Therefore, there must be at least one edge $\{r,r'\}$ in $T$ with $\pi(c_{r'}(S''))<\pi(c_r(S))$, which implies 
	$\Phi(S'')-\Phi(S)=\sum_{r'\in S_i''\setminus S_i}\pi(c_{r'}(S''))-\sum_{r\in S_i\setminus S_i''}\pi(c_{r}(S))
	=\sum_{\{r,r'\}\in T}\left(\pi(c_r'(S''))-\pi(c_r(S))\right)<0.$
	By construction, the value of $\Phi$ is always integral and upper bounded by $n^2\cdot m^2$, where $n$ is the number of players and $m$ the number of resources in $\Gamma$. Hence, $\Gamma$ reaches a PNE from an arbitrary state after at most $n^2\cdot m^2$ best-response improvement steps.
	\qed
\end{proof}

The theorem states that strong non-monotonicity is necessary to construct a game without a PNE from a \emph{single} aggregation function $f$. However, the technique used in the proof only requires that all aggregation functions used in the game have the same order on vectors which differ in exactly one component. It is irrelevant how these functions order vectors which differ in several components.
\begin{corollary}
	Let $\Gamma$ be a matroid congestion game in which the costs of player $i$ are computed according to her personal aggregation function $f_i$.  If for all $i\in N$ the function $f_i$ is weakly monotone and for all $i,j\in N$ and all vectors $v$ and $w$ that differ in exactly one component it holds that $f_i(v)\leq f_i(w)\Leftrightarrow f_j(v)\leq f_j(w)$, then $\Gamma$ possesses a pure Nash equilibrium. 
\end{corollary}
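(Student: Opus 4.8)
The plan is to recognize that the proof of Theorem~\ref{weaklymonotone} never exploits the global shape of a single function: every inequality it invokes about the aggregation function is an inequality between two cost vectors that differ in exactly one component, and all such inequalities are channelled through the induced relation $\leq'$. The hypothesis of the corollary was tailored so that all players agree precisely on these single-component comparisons, so one common relation $\leq'$, and hence one common potential function, still exists; after that, the argument of Theorem~\ref{weaklymonotone} can be replayed almost verbatim.

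Concretely, I would first define $\leq'$ using any fixed player, say player $1$: set $x\leq' y$ if and only if $f_1(w_x)\leq f_1(w_y)$ for all vectors $w_x,w_y$ that contain $x$, respectively $y$, and are otherwise identical. Weak monotonicity of $f_1$ makes this a total preorder on $\mathbb{R}$, exactly as in the theorem. The crucial observation is that, by the assumed equivalence $f_i(v)\leq f_i(w)\Leftrightarrow f_j(v)\leq f_j(w)$ for all single-component-difference vectors, this relation does not depend on the chosen player: every $f_i$ orders single swaps in the same way, so $\leq'$ is genuinely a single relation shared by all players. I would then enumerate the finitely many resource-cost values occurring in $\Gamma$ according to $\leq'$, write $\pi(\cdot)$ for the position of a value in this enumeration, and use the same potential $\Phi(S)=\sum_{r\in R}\sum_{k=1}^{n_r(S)}\pi(c_r(S))$ as in the theorem.

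It remains to rerun the potential-decrease step for an arbitrary improving deviation of some player $i$ from $S_i$ to a best response $S_i'$. As both are bases of the same matroid, the exchange graph has a perfect matching, and because $S_i'$ is a best response \emph{for $f_i$} the same passage as in the theorem yields a best response $S_i''$ for which every matching edge $\{r,r'\}$ satisfies $c_{r'}(S'')\leq' c_r(S)$, hence $\pi(c_{r'}(S''))\le\pi(c_r(S))$. To force a strict decrease of $\Phi$ I would argue, exactly as before, that if all edges were $\leq'$-tight then $S_i$ could be turned into $S_i''$ by single-resource swaps, none of which decreases $f_i$ (here each swap changes the vector in one component only, so the direction of change is dictated by $\leq'$), contradicting $f_i(S_i'',S_{-i})\le f_i(S_i',S_{-i})<f_i(S_i,S_{-i})$. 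Thus $\Phi$ strictly decreases, and since $\Phi$ is integral and bounded by $n^2 m^2$, a global minimizer of $\Phi$ is a state from which no player has an improving best response, i.e.\ a pure Nash equilibrium (and best responses even converge to one in polynomially many steps).

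I do not expect a real obstacle: the only point that needs care is checking that $\leq'$ is well-defined as one relation even though it is witnessed by several different functions $f_i$, and this is immediate from the compatibility hypothesis. Everything else is a direct transcription of the proof of Theorem~\ref{weaklymonotone}, because in each individual deviation only the deviating player's function $f_i$ appears, and it appears only through comparisons of single-component-difference vectors, on which it agrees with the common $\leq'$.
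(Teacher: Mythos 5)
Your proposal is correct and follows exactly the paper's own justification: the paper derives this corollary by observing that the proof of Theorem~\ref{weaklymonotone} only uses the aggregation function through the induced single-swap preorder $\leq'$, so a common $\leq'$ (guaranteed by the compatibility hypothesis) suffices to run the same potential argument. Your write-up merely makes explicit the well-definedness of the shared relation and the fact that each deviation involves only the deviating player's $f_i$, which is precisely the intended reading.
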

This corollary is interesting mainly because it establishes an almost tight border up to which the existence of PNE can be guaranteed. If we are given two aggregation functions $f$ and $g$ and two vectors $v$ and $w$ that differ in exactly one component, with $f(v)<f(w)$ and $g(w)<g(v)$, then it is obvious that we can construct a 2-player game in which the first player uses the aggregation function $f$ and the second $g$ and the two players alternate between the cost vectors $v$ and $w$, as we did in the proof of Theorem \ref{strongnonmonotone} for strongly non-monotone functions.

\section{Conclusion}

For congestion games with $L_p$-aggregation functions, we presented methods to compute approximate PNE and bound the price of anarchy which are based on previous results regarding standard congestion games. It is an open point for future work to examine if these results could be improved by specifically designing methods for congestion games with $L_p$-aggregation functions. Another interesting approach for further research would be to combine the application of aggregation functions with other classes of congestion games such as weighted congestion games or non-atomic congestion games, and examine the implications for the (approximate) pure Nash equilibria in these games.

\bibliographystyle{splncs03}

\newpage
\appendix
\section*{Appendix}

\section{Omitted Proofs}

\subsection{Proof of Proposition~\ref{p-equal}}
\begin{proof}
	Let $\Gamma$ be a congestion game in which all players use the aggregation function $L_p$ for some $p\in \mathcal{R}$. Then $\Gamma$ can be replaced by the equivalent standard congestion game $\Gamma'$ with cost functions $c_r'(S)=c_r(S)^p$. The cost of player $i$ in state $S$ in $\Gamma$ is equal to $c_i(S)=c_i'(S)^{\frac{1}{p}}$. Since it is a strictly increasing function, taking the $p$-th root is strictly monotone, and thus every best-response strategy in $\Gamma'$ is also a best-response strategy in $\Gamma$. Since pure Nash equilibria are guaranteed to exist in standard congestion games~\cite{RO73}, we can conclude that $\Gamma'$ possesses a PNE, which implies that $\Gamma$ possesses a PNE as well.
	\qed
\end{proof}

\subsection{Proof of Theorem~\ref{t-notexist}}
\begin{proof}
	We prove the statement by constructing for any arbitrary $p=p_1$ and $q=p_2$ a game as described in the theorem. Let $p,q\in \mathbb{N}$ with $q> p$, and $z:=\frac{q+p}{2}$. Consider the following game:
	
	$\Gamma=(N,R,\left(\Sigma_i\right)_{i\in N}, \left(c_r\right)_{r\in R},\left(p_i\right)_{i\in N})$ with $N=\{1,2\}$, $R=\{r_1,\dots,r_6\}$, $\Sigma_1=\left\{\{r_1,r_3,r_5\},\{r_2,r_4,r_6\}\right\}$, $\Sigma_2=\left\{\{r_1,r_3,r_6\},\{r_2,r_4,r_5\}\right\}$, $c_{r_j}(1)=0$ for $1\leq j\leq 6$, $c_{r_j}(2)=1$ for $1\leq j\leq 4$, and $c_{r_j}(2)=2^\frac{1}{z}$ for $j\in \{5,6\}$,	$p_1=p$ and $p_2=q$.
	
	The strategy spaces are constructed in such a way that in every state both players share either two resources ($r_1$ and $r_3$ or $r_2$ and $r_4$) or exactly one resource ($r_5$ or $r_6$). The resources $r_5$ and $r_6$ are more expensive; since player one has the lower $p_i$-value, she will be more willing to share the more expensive resource, while the player with the higher $p_i$-value will be more willing to share a higher number of resources. We will now analyze the costs of the players in the different states, disregarding the $p_i$-th root that is taken in the computation of the $L_p$-norm, since it does not influence the pure Nash equilibria.
	
	Let $S_1$ be a state in which the players share exactly one resource, and $S_2$ a state in which they share two resources. Since $p<z$ and $q>z$, we get:
	\begin{align*}
		c_1(S_1) =\left(2^\frac{1}{z}\right)^p=2^\frac{p}{z}<2=c_1(S_2)\text{ and }
		c_2(S_1) =\left(2^\frac{1}{z}\right)^q=2^\frac{q}{z}>2=c_2(S_2)
	\end{align*}
	
	Hence, player 1 prefers states in which only one resource is shared, whereas player 2 prefers states in which two cheaper resources are shared.
	Since both players can deviate from a state equivalent to $S_1$ to an $S_2$-state, and vice versa, none of the states of this game is a pure Nash equilibrium.
	\qed
\end{proof}

\subsection{Proof of Theorem~\ref{PoA}}
\begin{proof}
	For the proof, we use the smoothness result by Christodolou et al.~\cite{CH11} for an arbitrary $z\in \mathbb{N}$. Furthermore, we apply the fact that the cost of any player in any state is at most the cost she would incur according to the $L_1$-norm, and the fact that the cost according to the $L_p$-norm is at least $d^{\frac{1}{p}-1}$ times the $L_1$-cost.
	
	For a state $S=(S_1,\dots,S_n)$ and a player $i$ we denote by $c_i^1(S)=\sum_{r\in S_i}c_r(S)$ the cost of player $i$ according to the $L_1$-norm and we denote by $cost^1(S)=\sum_{i\in N}c_i^1(S)$ the total cost of all players in state $S$ according to the $L_1$-norm. For any two states $S=(S_1,\dots,S_n)$ and $S'=(S_1',\dots,S_n')$ and any $z\in \mathbb{N}$, we get:
	\begin{align*}
	\sum_{i\in N}c_i(S_i',S_{-i})&\leq \sum_{i\in N}c_i^1(S_i',S_{-i})
	\leq \frac{z^2+3z+1}{2z+1}\cdot cost^1(S')+\frac{1}{2z+1}\cdot cost^1(S)\\
	&\leq d^{1-\frac{1}{p}}\cdot\frac{z^2+3z+1}{2z+1}\cdot cost(S')+\frac{d^{1-\frac{1}{p}}}{2z+1}\cdot cost(S),
	\end{align*}
	which proves that $\Gamma$ is $\left(d^{1-\frac{1}{p}}\cdot\frac{z^2+3z+1}{2z+1},d^{1-\frac{1}{p}}\cdot\frac{1}{2z+1}\right)$-smooth. 
	Since every $(\lambda,\mu)$-smooth game has a price of anarchy of at most $\frac{\lambda}{1-\mu}$, this implies that the price of anarchy of $\Gamma$ is bounded by 
	$d^{1-\frac{1}{p}}\cdot\frac{z^2+3z+1}{2z+1-d^{1-\frac{1}{p}}},$
	for every $z\in \mathbb{N}$ with $2z+1>d^{1-\frac{1}{p}}$. In particular, this holds for the value of $z=\left\lfloor \frac{1}{2}\cdot\left(d^{1-\frac{1}{p}}+\sqrt{5+6\cdot(d^{1-\frac{1}{p}}-1)}\right.\right.$ $\left.\left.\overline{+(d^{1-\frac{1}{p}}-1)^2}\right)\right\rfloor$ given in the theorem, which according to~\cite{CH11} is optimal with respect to the achieved bound on the price of anarchy.
	\qed
\end{proof}

\subsection{Proof of Theorem~\ref{Lp-approx}}
\begin{proof}
	Let $z:=\left(\frac{1}{2}\cdot \left(\frac{1}{p}+\frac{1}{q}\right)\right)^{-1}$.
	We prove the statement by showing that the function $\Phi$ defined as
	$\Phi(S)=\sum_{r\in R}\sum_{i=1}^{n_r(S)}c_r(i)^z$
	is a $\beta$-approximate potential function: i.e., whenever a player improves her personal cost by more than a factor of $\beta$, the value of $\Phi$ is bound to decrease. This definition is a slight variation of Rosenthal's potential function~\cite{RO73}. It has the property that its value decreases whenever a player performs a change in strategy which would be beneficial if her costs were aggregated according to the $L_z$-norm.
	
	As we will show, a player can improve her cost in any state by at most a factor of $\beta$ without decreasing her cost according to the $L_z$-norm. Hence, if a player improves by more than a factor of $\beta$, she also improves her cost according to the $L_z$-norm. Due to the definition of $\Phi$, this implies that the value of $\Phi$ decreases, which yields the theorem.
	
	Until the last step of the proof, we disregard the fact that players take the $p_i$-th root of their cumulated costs (and pretend that the cost of player $i$ in state $S=(S_i,S_{-i})$ was $\sum_{r\in S_i}c_r(S)^{p_i}$).
	We denote by $c_i^z(S_i,S_{-i})=\sum_{r\in S_i}c_r(S_i,S_{-i})^z$ the cost of player $i$ in state $(S_i,S_{-i})$ if it was computed according to the $L_z$-norm (i.e. $p_i=z$).
	Consider a player $i$ with aggregation value $p_i$ in state $S=(S_i,S_{-i})$. Assume that $i$ can improve her cost by deviating to a state $S'=(S_i',S_{-i})$ without improving her costs according to the $L_z$-norm, i.e. $c_i(S')<c_i(S)$, but $c_i^z(S')\geq c_i^z(S)$.
	
	We now distinguish between the cases $p_i\leq z$ and $p_i\geq z$.
	
	For the case of $p_i\geq z$, we get:
	\begin{align*}
		c_i(S_i',S_{-i})&=\sum_{r\in S_i'}c_r(S_i',S_{-i})^{p_i}=\sum_{r\in S_i'}\left(c_r(S_i',S_{-i})^z\right)^\frac{p_i}{z}\\
		&\geq \sum_{r\in S'} \left(\frac{1}{|S_i'|}\cdot c_i^z(S_i',S_{-i})\right)^\frac{p_i}{z}\geq|S_i'|\cdot \left(\frac{1}{|S_i'|}\cdot c^z_i(S)\right)^\frac{p_i}{z}.
	\end{align*}
	The first inequality holds since the value of $\sum_{i=1}^k x_i^l$ for fixed $x_1+\dots+x_k$ and $l\geq 1$ is minimized if $x_i=x_j$ holds for all $i$ and $j$. The second inequality is correct since by assumption it holds that $c_i^z(S_i',S_{-i})\geq c_i^z(S)$.
	
	On the other hand, we have 
	\begin{align*}
		c_i(S)&=\sum_{r\in S_i}c_r(S)^{p_i}=\sum_{r\in S_i}\left(c_r(S)^z\right)^\frac{p_i}{z} \leq \left(\sum_{r\in S_i}c_r(S)^z\right)^\frac{p_i}{z}=c_i^z(S)^\frac{p_i}{z}.
	\end{align*}
	Putting things together, we get
	\begin{align*}
		\frac{c_i(S)}{c_i(S_i',S_{-i})}&\leq\frac{|S'_i|^\frac{p_i}{z}\cdot c_i^z(S)^\frac{p_i}{z}}{|S_i'|\cdot c_i^z(S)^\frac{p_i}{z}}\leq \frac{d^\frac{p_i}{z}}{d}=d^{\frac{p_i}{z}-1}.
	\end{align*}
	We now regard the case $p_i\leq z$, which implies $\frac{p_i}{z}\leq 1$.
	
	Analogously to the first case, we use that for fixed $x_1+\dots+x_k$ and $l\leq 1$ the value of $\sum_{i=1}^k x_i^l$ is minimized if $x_i=x_j$ holds for all $i$ and $j$ and maximized if all but one summand are 0. This is a direct reversal of the properties for $l\geq 1$.
	
	Thus, following the same line of argument as in the first case, we get
	\begin{align*}
		c_i(S_i',S_{-i})&\geq c_i^z(S)^\frac{p_i}{z}\text{ and}\\
		c_i(S)& \leq |S_i|\cdot \left(\frac{1}{|S_i|}\cdot c^z_i(S)\right)^\frac{p_i}{z}\leq d\cdot \left(\frac{1}{d}\cdot c^z_i(S)\right)^\frac{p_i}{z},\text{ which implies}\\
		\frac{c_i(S)}{c_i(S_i',S_{-i})}&\leq \frac{d}{d^\frac{p_i}{z}}=d^{1-\frac{p_i}{z}}
	\end{align*}
	Now we will take into account that players take the $p_i$-th root of their cumulated costs. This makes a big difference for approximate equilibria, since every $\gamma$-improvement step in the game without taking roots corresponds to a $\gamma^\frac{1}{p_i}$-improvement step in the original game. We denote by $c_i^o(S)$ the cost of player $i$ in the original game. Using the definition of $z$, we analyze this for both cases separately, starting with $p_i\geq z$.
	\begin{align*}
		\frac{c_i^o(S)}{c_i^o(S_i',S_{-i})}&\leq\left(d^{\frac{p_i}{z}-1}\right)^\frac{1}{p_i}=d^{\frac{1}{z}-\frac{1}{p_i}}\leq d^{\frac{1}{z}-\frac{1}{q}}=d^{\frac{1}{2}\cdot\left(\frac{1}{p}+\frac{1}{q}\right)-\frac{1}{q}}=d^{\frac{1}{2}\left(\frac{1}{p}-\frac{1}{q}\right)}
	\end{align*}
	Analogously, we get for the case of $p_i\leq z$:
	\begin{align*}
		\frac{c_i^o(S)}{c_i^o(S_i',S_{-i})}&\leq\left(d^{1-\frac{p_i}{z}}\right)^\frac{1}{p_i}=d^{\frac{1}{p_i}-\frac{1}{z}}\leq d^{\frac{1}{p}-\frac{1}{z}}=d^{\frac{1}{p}-\frac{1}{2}\cdot\left(\frac{1}{p}+\frac{1}{q}\right)}=d^{\frac{1}{2}\left(\frac{1}{p}-\frac{1}{q}\right)}
	\end{align*}
	This shows that no player can improve her cost by more than a factor of $\beta=d^{\frac{1}{2}\left(\frac{1}{p}-\frac{1}{q}\right)}$ without decreasing her cost with respect to the $L_z$-norm. Hence, every $\beta$-improvement step of any player decreases the value of $\Phi$. This implies that sequences of $\beta$-improvement steps can not contain cycles and therefore reach a $\beta$-approximate equilibrium after a finite number of steps.
	\qed
\end{proof}

\subsection{Proof of Theorem~\ref{Lpnonapprox}}
\begin{proof}
	\begin{figure}[b]
		
	\end{figure}
	For any given $p<q$, we define $z:=\left(\frac{1}{2}\cdot \left(\frac{1}{p}+\frac{1}{q}\right)\right)^{-1}$. Based on this, we construct the following 2-player congestion game $\Gamma$ with $N=\{1,2\}$, $R=\{r_1^1,r_1^2,r_2^1,r_2^2,\dots,r_d^1,r_d^2\}$, $\Sigma_1=\left\{\{r_1^1,r_2^1,\dots, r_{d-1}^1,r_d^1\},\{r_1^2,r_2^2,\dots, r_{d-1}^2,r_d^2\}\right\}$,\\	$\Sigma_2=\left\{\{r_1^1,r_2^1,\dots, r_{d-1}^1,r_d^2\},\{r_1^2,r_2^2,\dots, r_{d-1}^2,r_d^1\}\right\}$, $c_{r_j^k}(0)=0$ and $c_{r_j^k}(1)=1$ for $1\leq j\leq d-1$ and $k\in\{1,2\}$, $c_{r_d^k}(0)=1$ and $c_{r_d^k}(1)=d^\frac{1}{z}$ for $k\in\{1,2\}$, 	$p_1=p$ and $p_2=q$.
	
	We first remark that $d$ is indeed the size of the strategy containing most resources (actually, all strategies have cardinality $d$). Due to the construction of the strategy sets, in every state both players either share the $d-1$ resources $r_1^j,\dots, r_{d-1}^j$ or only the resource $r_d^j$ for a $j\in \{1,2\}$.
	
	In the former case, player 1 incurs a cost of $d^\frac{1}{p}$ and player 2 a cost of $d^\frac{1}{q}$. In the latter case, both players incur a cost of $d^\frac{1}{z}$, which is better for player 1, but worse for player 2. Hence, when deviating to the more preferable state, player 1 improves her cost by a factor of $\frac{d^\frac{1}{p}}{d^\frac{1}{z}}=d^{\frac{1}{p}-\frac{1}{z}}=d^{\frac{1}{p}-\frac{1}{2}\cdot\left(\frac{1}{p}+\frac{1}{q}\right)}=d^{\frac{1}{2}\left(\frac{1}{p}-\frac{1}{q}\right)}.$
	Likewise, in the opposite direction player 2 improves her cost by a factor of
	$d^{\frac{1}{z}-\frac{1}{q}}=d^{\frac{1}{2}\cdot\left(\frac{1}{p}+\frac{1}{q}\right)-\frac{1}{q}}=d^{\frac{1}{2}\left(\frac{1}{p}-\frac{1}{q}\right)}.$
	
	Starting in an arbitrary state, both players will alternate between the two types of states. In every improvement step the respective player improves her cost by a factor of $d^{\frac{1}{2}\left(\frac{1}{p}-\frac{1}{q}\right)}$. This shows that $\Gamma$ contains no $\beta$-approximate pure Nash equilibrium for any $\beta<d^{\frac{1}{2}\left(\frac{1}{p}-\frac{1}{q}\right)}$.
	\qed
\end{proof}

\subsection{Proof of Theorem~\ref{Lpapproxcomp}}
\begin{proof}
	We prove the statement for $d\geq 3$. Afterwards, we will shortly point out which modifications are necessary for the case $d=2$.
	We reduce from the independent set problem. Since the size of the players' strategies may not exceed $d$, we use the independent set problem on graphs with bounded node degrees ($IS_b$). It is NP-complete for degrees of at least 3~\cite{GJ02}. Since $d\geq 3$, we can reduce from $IS_d$, which ensures $|E_v|\leq d$ for all $v\in V$.   
	Let $z=\left(\frac{1}{2}\cdot\left(\frac{1}{p}+\frac{1}{q}\right)\right)^{-1}$. For any instance $\left<G=(V,E),k\right>$, we construct a congestion game with $L_p$-aggregation functions $\Gamma$ as follows:
	
		\begin{center}
			$\Gamma=(N,R,\left(\Sigma_i\right)_{i\in N},
			\left(c_r\right)_{r\in R},\left(p_i\right)_{i\in N})$:\\
		\end{center}
		$N=\{1,\dots,k,c,k+1,k+2\}$,\\
		$R=\{r_e~|~e\in E\}\cup \{r_1^1,r_1^2,\dots,r_d^1,r_d^2,r_c\}$,\\
		$\Sigma_i=\{\{r_e~|~e\in E_v\}~|~v\in V\} \cup \{\{r_c\}\}$ for $1\leq i \leq k$,\\
		$\Sigma_c=\{\{r_c\},\{r_d^1,r_d^2\}\}$,\\
		$\Sigma_{k+1}=\{\{r_1^1,\dots,r_{d-1}^1,r_d^1\},\{r_1^2,\dots,r_{d-1}^2,r_d^2\}\}$,\\
		$\Sigma_{k+2}=\{\{r_1^1,\dots,r_{d-1}^1,r_d^2\},\{r_1^2,\dots,r_{d-1}^2,r_d^1\}\}$,\\
		$c_{r_e}(1)=0$ and $c_{r_e}(x)=2\cdot d^3$ for $x\geq 1$ and for all $e\in E$,\\
		$c_{r_c}(1)=0$ and $c_{r_c}(x)=2\cdot d^2$ for $x\geq 1$,\\
		$c_{r_l^j}=(0,1)$ for $1\leq l\leq d-1$ and $j\in \{1,2\}$,\\
		$c_{r_d^j}=(0,1,d^\frac{1}{z})$ for $j\in \{1,2\}$,\\
		$p_{k+2}=q$ and $p_i=p$ for all other players.

	If the connection player $c$ does not interfere in the sub game defined by the players $k+1$ and $k+2$, the state $\left(\{r_1^1,\dots,r_d^1\},\{r_1^2,\dots,r_{d-1}^2,r_d^1\}\right)$ is a PNE, since both players have a cost of 1, which can not be improved. On the other hand, if $c$ allocates $r_d^1$ and $r_d^2$, the game is equivalent to the game used in the proof of Theorem \ref{Lpnonapprox}, which does not possess a $\beta$-approximate PNE for any $\beta<d^{\frac{1}{2}\cdot\left(\frac{1}{p}-\frac{1}{q}\right)}$.
	
	Obviously, if the $k$ node players allocate disjoint edge sets, all of them and the connection player have a cost of 0 without interfering in the sub game of player $k+1$ and $k+2$. If two players allocate non-disjoint edge sets, they incur a cost of at least $2\cdot d^3$, which can be improved by a factor of at least $d$ by allocating $r_c$. In this case, it is beneficial for player $c$ to join the sub game of player $k+1$ and $k+2$ and annihilate the existence of a PNE in this game. By doing so, player $c$ improves her cost by a factor of at least $\frac{2\cdot d^2}{2\cdot d}=d$.
	
	In summary, $\Gamma$ possesses a state in which no player can improve by a factor of at least $d^{\frac{1}{2}\cdot\left(\frac{1}{p}-\frac{1}{q}\right)}$ if and only if $G$ contains an independent set of size $k$. This concludes the proof for $d>2$. We will now shortly discuss the case $d=2$, which has to be handled separately since the independent set problem is not NP-hard for graphs with node degrees of at most 2.
	
	However, we can reduce from $IS_3$ by introducing auxiliary resources and players. For every node $v$ with three edges $e_1$, $e_2$ and $e_3$, we introduce a node player $v$ and a resource $r_v$. The strategies available to $v$ are given by $\{r_v\}$, $\{r_{e_2},r_{e_3}\}$ and $\{r_c\}$. In the strategy sets of the $k$ original node players, we replace $E_v$ by $\{r_v,r_{e_1}\}$. The cost function of $r_v$ is defined to be equal to the costs of the edge resources.
	
	Given this, we can construct a PNE in $\Gamma$ from an independent set of size $k$ as follows:
	\begin{itemize}
		\item For any node $v$ that is part of the independent set one player chooses her strategy $E_v$ or, if $v$ has degree 3, the strategy $\{r_v,r_{e_1}\}$.
		\item All auxiliary node players for a node $v$ play their strategy $\{e_2,e_3\}$ if $v$ is part of the independent set, and $\{r_v\}$ otherwise.
	\end{itemize}
	In the described state all node players incur a cost of 0 and none of them allocates $r_c$. Thus, a PNE is established if player $c$ allocates $r_c$ and does not interfere in the sub game containing the players $k+1$ and $k+2$.
	
	On the other hand, if $G$ does not contain an independent set of size $k$, it holds in every state that two node players allocate the resource $r_v$ for the same node $v$ or the resource $r_e$ for the same edge $e$, or at least one node player allocates $r_c$. As described for $d\geq 3$, this implies that in every state at least one player can improve her cost by a factor of at least $d^{\frac{1}{2}\cdot\left(\frac{1}{p}-\frac{1}{q}\right)}$, which completes the proof for all $d\geq 2$.
	\qed
\end{proof} 
\subsection{Proof of Theorem~\ref{strongnonmonotone}}
\begin{proof}
	We proof the statement by using the vectors in the definition of strongly non-monotone functions in order to construct a game that is equivalent to the well-known ``matching pennies'' game (cf. e.g.~\cite{KL12}). Let $x$, $y$, $z$, $j$, and $k$ be defined as in the definition. We define the following game with two players:
	
		$N=\{1,2\}$, $R=\{r_1^1,\dots,r_{d-1}^1,r_1^2,\dots,r_{d-1}^2,r_h,r_t\}$,\\
		$\Sigma_1=\left\{\{r_1^1,\dots,r_{d-1}^1,r_h\},\{r_1^1,\dots,r_{d-1}^1,r_t\}\right\}$,\\
		$\Sigma_2=\left\{\{r_1^2,\dots,r_{d-1}^2,r_h\},\{r_1^2,\dots,r_{d-1}^2,r_t\}\right\}$,\\
		$c(r_i^1)=(x_i)$ for $i<j$ and $c(r_i^1)=(x_{i+1})$ for $i\geq j$,\\
		$c(r_i^2)=(z_i)$ for $1\leq i\leq d-1$,\\
		$c(r_h)=c(r_t)=(x_j,y_k)$.
		
	Clearly, both $\Sigma_1$ and $\Sigma_2$ form the sets of the bases of a matroid.
	For the first player, it is preferable to use $r_h$ if the second player uses $r_h$, and $r_t$ if player 2 uses $r_t$, since 
	$f(x_1,\dots,x_{j-1},x_{j+1},\dots,x_k,y_k,x_{k+1},\dots,x_d)<f(x_1,\dots,x_d)$. 
	On the other hand, player 2 prefers to allocate the resource different from player 1, since
	$f(z_1,\dots,x_j,\dots,z_{d-1})<f(z_1,\dots,y_k,\dots,z_{d-1})$. Hence, none of the four states of this game is a pure Nash equilibrium.\qed
\end{proof}

\section{Examples Values for the Approximation Factors}
\label{examplevalues}
Example values for the two approaches for linear cost functions, based on the simplifications $O(1)=1$ and $\epsilon=0$:
	\begin{center}
		\begin{tabular}{|l| l|| c|c ||c|c|}
			\hline
			p&q&\multicolumn{2}{|c||}{1st approach}&\multicolumn{2}{|c|}{2nd approach}\\
			
			&&d=2&d=10&d=2&d=10\\
			1&2&2.83&6.32&1.59&2.37\\
			\hline
			2&3&3.17&9.28&2.54&2.91\\
			\hline
			1&$\infty$&4&20&2.83&6.32\\
			\hline
			2&$\infty$&4&20&4.76&7.11\\
			\hline
			$10$&$\infty$&4&20&20.71&22.44\\
			\hline
		\end{tabular}
	\end{center}

\end{document}